\documentclass[journal,doublecolumn]{IEEEtran}
\usepackage{graphicx}
\usepackage{subfigure}
\usepackage{amsfonts,amsmath,latexsym,amssymb}
\usepackage{theorem}
\usepackage{cite}
\usepackage{color}
\newtheorem{Theorem}{Theorem}
\newtheorem{lemma}{\textbf{Lemma}}

\bibliographystyle{IEEEtran}
\newtheorem{DD}{Definition}
\usepackage[lined,boxed,commentsnumbered, ruled]{algorithm2e}
\bibliographystyle{IEEEtran}
\usepackage{fancyhdr}
\usepackage{lipsum} 
\fancyhf{} 
\rfoot{\thepage}
\usepackage{setspace}

\begin{document}

\title{Coding based Data Broadcasting for Time Critical Applications with Rate Adaptation}
\author{Xiumin Wang,~
        Chau Yuen~and~Yinlong Xu
\thanks{Xiumin Wang is currently with the School of Computer and Information, Hefei University of Technology, Hefei, China.
        E-mail: wxiumin@hfut.edu.cn.}
\thanks{Chau Yuen is with Singapore University of Technology and Design, Singapore. Email: yuenchau@sutd.edu.sg.}
\thanks{Yinlong Xu is with the School of Computer Science, University of Science and Technology of China, Hefei, China. Email: ylxu@ustc.edu.cn.}
\thanks{This research is partly supported by the International Design Center
(grant no. IDG31100102 and IDD11100101). It is also supported in
part by the National Natural Science Foundation of China (Grant No.
61300212 and 61073038), and P.D. Programs Foundation of Ministry of
Education of China (Grant No. 20130111120010).}
 } \maketitle

\begin{abstract}
In this paper, we dynamically select the transmission rate and design wireless network coding to improve the quality of services such as delay for time critical applications.
In a network coded system, with low transmission rate and hence longer transmission range, more packets may be encoded, which increases the coding opportunity. However, low transmission rate may incur extra transmission delay, which is intolerable for time critical applications.
We design a novel joint rate selection and wireless network coding (RSNC) scheme with delay constraint, so as to maximize the total benefit (where we can define the benefit based on the priority or importance of a packet for example)  of the packets that are successfully received at the destinations without missing their deadlines.
We prove that the proposed problem is NP-hard, and propose a novel graph model to mathematically formulate the problem.
For the general case, we propose a transmission metric and design an efficient algorithm to determine the transmission rate and coding strategy for each transmission.
For a special case when all delay constraints are the same, we study the pairwise coding and present a polynomial time pairwise coding algorithm that achieves
an approximation ratio of $1-\frac{1}{e}$ to the optimal pairwise coding solution, where $e$ is the base of the natural logarithm. Finally, simulation results demonstrate the superiority of the proposed RSNC scheme.

\end{abstract}

\section{Introduction}
With the increase in both wireless channel bandwidth and
computational capability of wireless devices, wireless networks can
be used to support time critical applications such as video
streaming or interactive gaming. Such time critical applications
require the data content to reach the destination node(s) in a
timely fashion, i.e., a delay deadline is imposed on packet
reception, beyond which the reception becomes useless (or invalid)
\cite{XTL2006Time-critical14,Acharya1998,Dykeman1986,Aksoy1999,Xuan1997}.
These constraints can be imposed either by applications or the
users. For example, many financial users are interested in the
up-to-minute stock quotes so as to react to dynamic and rapid
market. As another example, in wireless location-based services, the
queried information (e.g., the traffic jam) is valid within a local
area, as when the mobile user (e.g., vehicle user) leaves the area,
the information becomes useless \cite{Linglong}.

Recently, network coding becomes a promising approach to improve
wireless network performance \cite{Li2003,Sagduyu2007,Wang1,
Wang2,ACL+2000Network1216,KRH+2008XORs510,Fragouli2007,NTN+2009Wireless925,RCS2007minimum125,Wang3}.
Specifically, the work in \cite{KRH+2008XORs510} proposed the first
network coding based packet forwarding architecture, named {\em
COPE}, to improve the throughput of wireless networks. With COPE,
each node opportunistically overhears some of the packets
transmitted by its neighbors, which are not intended to itself. The
relay node can then intelligently XOR multiple packets and forward
it to multiple next hops with only one transmission, which results
in a significant throughput improvement. Another important work on
wireless network coding is index coding
\cite{Bar-Yossef,Bar-Yossef2,Alon2008}. In index coding, a
source/server node needs to send some packets over a wireless
broadcast channel to some destinations/clients, and initially each
destination holds a subset of packets (i.e., side information).
Recent works show that with network coding, the number of
transmissions required can be reduced significantly, which thus
improves the throughput.

In most recent works, network nodes always transmit packets at a fixed rate. However, most wireless systems are now capable of performing adaptive modulation to vary the link transmission rate in response to the signal to interference plus noise at the receivers. Transmission rate diversity exhibits a rate-range tradeoff: the higher the transmission rate, the shorter the transmission range for a given transmission power \cite{KV2009Is646}. To aid overhearing, one may use the lowest transmission rate, so as to successfully deliver packet to more receivers/overhearing nodes.
Although this may increase the coding opportunity, it may not yield good performance, especially for time critical applications, as the arrival times of the packets may be delayed due to lower transmission rate.

In the literature, a few works studied the relationships between adapting the transmission rate and the network coding gain \cite{KV2009Is646,CJH2010Joint2444,Kim2010,Ni2008}. The work in \cite{KV2009Is646} showed that compared with pure network coding scheme, joint rate adaptation and network coding is more effective in throughput performance. They also proposed a joint rate selection and coding scheme to minimize the sum of the uplink and the downlink costs in star network topology. The work in \cite{CJH2010Joint2444} mathematically formulated the optimal packet coding and rate selection problem as an integer programming problem, and proposed an efficient heuristic algorithm to jointly find a good combination of coding solution and transmission rate. There are only a few works considered the delay guarantee of packet receptions, which is especially important for time critical applications.

So far, the works in
\cite{Traskov,Yang2012,Yeow,ZX2010Broadcast6,Zhan2011,CDE1,CDE2}
considered the delay constraint of packet reception with network
coding. Specifically, \cite{Traskov} designed a jointly scheduling
of packet transmission and network coding to meet the restriction of
packet receptions in a multi-hop wireless network. Compared with
multi-hop transmission, \cite{Yang2012} studied a wireless
broadcasting scheduling over a one-hop communication. To meet the
hard deadline of the packets, they designed adaptive network coding
and formulated the problem as a Markov decision process. However,
they assume that all the packets have the same deadline. The work in
\cite{Yeow} aimed to optimize the delay of multicasting a data
stream from a sender to multiple one-hop receivers with network
coding. Based on queuing theory, they analyzed the delay performance
from both system and receiver perspectives. The works in
\cite{ZX2010Broadcast6,Zhan2011} also considered the delay
constraint of packet receptions over one-hop communication, and
proposed a coding scheme to minimize the number of packets that miss
their deadlines. All the above literatures worked well in their
designed settings. However, they all assume that the transmission
rates on all the links are the same and fixed.

\begin{figure}[t]
\begin{center}
\includegraphics[height=29mm,width=69mm]{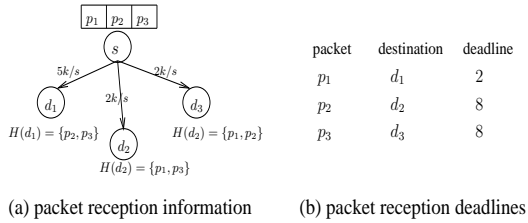}
\caption{Motivation illustration}\label{Fig.rsnc}
\end{center}
\end{figure}

Take Fig.~\ref{Fig.rsnc} as an example, where source node $s$ needs to transmit packet $p_1,p_2,p_3$ to node $d_1,d_2,d_3$
respectively. Fig.~\ref{Fig.rsnc}(a) gives the set of overheard packets $H(d_i)$ at destination $d_i$ (i.e. packets overheard by $d_i$ in previous transmissions, and available at $d_i$). Suppose that the size of each packet is $B=10k$, and the maximum transmission rates from $s$ to $d_1,d_2,d_3$ are $5k/s,2k/s$ and $2k/s$, respectively.
Fig.~\ref{Fig.rsnc}(b) shows the reception deadline of each required packet at its destination. For the current transmission, according to the work in \cite{KRH+2008XORs510,ZX2010Broadcast6}, $s$ will send the encoded packet $p_1\oplus p_2\oplus p_3$, as the most number of destinations can decode it. However, there is a problem for selecting the transmission rate at $s$. If $5k/s$ is selected, $d_2,d_3$ cannot successfully receive the packet, as the maximum transmission rates from $s$ to them are both $2k/s$. If $2k/s$ is selected, although all of the three receivers $d_1,d_2,d_3$ can receive and decode one ``wanted" packet, $p_1$  will miss its deadline at $d_1$, as its arrival time is $\frac{10k}{2k/s}=5s$.
As an alternative, we may choose to first send packet $p_1$ with transmission rate $5k/s$, where destinations $d_1$ will obtain a ``wanted" packet in $2s$. After this transmission, the encoded packet $p_2\oplus p_3$ can be sent with transmission rate $2k/s$, where destination $d_2$ and $d_3$ will receive and decode their ``wanted" packets after $7s$. Obviously, the latter solution is better than the first one, as no packet will miss the deadline.

In this paper, we study a new variant of index coding. By considering the impact of both transmission rate and network coding on the packet reception delay, we design a joint rate selection and network coding (RSNC) scheme for wireless time critical applications, so as to maximize the total benefits obtained by successfully receiving the packets without missing their deadlines. Here, benefit can be defined as the QoS or priority, and in this paper, we mainly set the benefit based on the priority or importance of a packet.
The main contributions of our paper can be concluded as follows:
\begin{itemize}
\item We propose a graph model, which considers both the heterogenous transmission rates and the deadlines of the packet receptions. Based on the graph model, we mathematically formulate the problem of maximizing the total benefits received by the packets that are successfully received at their destinations without missing deadlines, as an integer programming problem.
\item For each packet transmission, we propose a metric based on net benefit to determine the coded packet and the transmission rate. By considering the impact of the transmission rate on both delay and network coding gain, we also design an efficient algorithm to optimize the proposed metric.
\item We also consider a special case when the delay constraints for all the packets are the same. We study the pairwise coding solution for this special case, and present a polynomial time algorithm which achieves at least $1-\frac{1}{e}$ of the optimal pairwise coding solution.
\item We compare the performance of the proposed RSNC scheme with some existing algorithms. Simulation results show that the proposed scheme can significantly improve the total benefit obtained by the packets that are received without missing their deadlines.
\end{itemize}

The rest of the paper is organized as follows. In Section~\ref{Sec.formulation}, we define and formulate our problem.
The algorithm design for general joint rate selection and network coding scheme is given in Section~\ref{Sec.algorithm}. We study the pairwise coding solution for a special case in Section~\ref{Sec.special}. In Section~\ref{Sec.simulation}, we present the simulation results. Finally, we conclude the paper in Section~\ref{Sec.conclusion}.

\section{Problem Formulation}\label{Sec.formulation}
In this section, we first give the problem description and its complexity. Then, we introduce an auxiliary graph model, which can be used to design the algorithm. Finally, based on the graph model, we mathematically formulate the proposed problem. To ease understanding, the main notations are listed in Table~\ref{Table.notation}.

\begin{table}
\caption{\small{Main notations and their descriptions}}\center\label{Table.notation}
\begin{tabular}{|l|l|}
\hline
$B$ & The size of the packet\\
\hline
$D$ & The set of destination nodes\\
\hline
$d_i$ & The $i$-th destination in $D$\\
\hline
$H(d_i)$ & The set of available packets at destination $d_i$\\
\hline
$P$ & The set of $n$ packets\\
\hline
$p_j$ & The $j$-th packet in $P$\\
\hline
$R(d_i)$ & The set of required packets of $d_i$\\
\hline
$r(s,d_i)$ & The maximum transmission rate \\
& on transmission link from $s$ to $d_i$\\
\hline
$T_{i,j}$ & The deadline of packet $p_j$ required at $d_i$\\
\hline
$\alpha_{i,j}$ & The benefit of the packet $p_j$ at $d_i$\\
\hline
\end{tabular}
\end{table}

\subsection{Problem Description}
In this paper, we consider the application of network coding in
wireless broadcasting/multicasting. Without loss of generality, let
$s$ be the source/server node to send a data file to its destination
nodes in $D=\{d_1,d_2,\cdots,d_m\}$. Assume that the data file is
divided into $n$ packets in $P=\{p_1,p_2,\cdots,p_n\}$, and each
packet has the same size $B$. Suppose that initially, each
destination node has already stored a subset of packets in its
buffer (e.g., side information or received from previous
broadcasting) \cite{Zhan2011}. Let $H(d_i)$ be the set of available
packets at $d_i$, and $R(d_i)$ be the set of required packets by
$d_i$, i.e., $R(d_i)\subseteq P,H(d_i)\subseteq P$. For each packet
$p_j\in R(d_i)$, let $T_{i,j}$ be the reception deadline of packet
$p_j$ at node $d_i$. Let $\alpha_{i,j}$ be the benefit (i.e., profit
or importance) of packet $p_j$ at $d_i$. For example, the ``benefit"
$\alpha_{i,j}$ can be the profit of the packet $p_j$ obtained by
source node $s$ if $p_j$ is timely received at destination $d_i$ by
deadline $T_{i,j}$. Suppose that $r(s,d_i)$ is the maximum
transmission rate on link $(s,d_i)$, and only if the transmission
rate from $s$ to $d_i$ is no more than $r(s,d_i)$, the packet sent
from $s$ can be successfully received by $d_i$ \cite{KV2009Is646}.

Assume that $s$ knows the side information that each destination has
such that it can perform network coding operation. Such information
can be achieved by using {\em reception reports}, as introduced in
\cite{KRH+2008XORs510}. We also assume that node $s$ knows the
deadlines of the packet receptions at its receivers. As in COPE
\cite{KRH+2008XORs510}, only XORs coding is performed at the node in
our work. In addition, our broadcast channel is that at each time,
the source node broadcasts a single message to all the receivers at
one single rate, which is different from those in the
information-theoretic broadcast channel, where the source node can
send different packets to different receivers at different rates.

Our problem is that given the information of $H(d_i)$, $R(d_i)$, $T_{i,j}$, $\alpha_{i,j}$ and $r(s,d_i)$ for $\forall i,j$, we design the encoding strategy of the packets and select the transmission rate for each propagation, such that the total benefit of the packets that are successfully received at their destinations without missing their deadlines is maximum.

Let $z_{i,j}$ be $1$ if packet $p_j$ does not miss its deadline at $d_i$, otherwise, let it be $0$, where $p_j\in R(d_i)$. Thus, our objective is to maximize
\begin{eqnarray}
\sum_{d_i\in D}\sum_{p_j\in R(d_i)}\alpha_{i,j}z_{i,j}
\end{eqnarray}
In this paper, we refer such a problem of joint Rate Selection and Network Coding (RSNC) for time critical applications as RSNC problem. Note that if $\alpha_{i,j}=1$ for $\forall i,j$, the RSNC problem becomes to maximize the number of packets that are received without missing their deadlines, which is the case in our previous conference paper \cite{Wang2012}.

\begin{lemma}
The RSNC problem is NP-hard.
\end{lemma}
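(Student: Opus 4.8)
The plan is to establish NP-hardness by exhibiting a polynomial-time reduction from a classical NP-hard problem to a restricted special case of RSNC. The restriction I would use is the subfamily of instances in which \emph{no coding is ever useful}, so that the combinatorial difficulty is concentrated entirely in the rate/deadline scheduling on the single broadcast channel. Under this restriction the problem collapses to a single-machine weighted-throughput scheduling problem, namely maximizing the total weight of on-time jobs ($1||\sum_j w_j U_j$), equivalently the \textsc{Knapsack} problem when all deadlines coincide; both are well known to be NP-hard. Thus it suffices to embed \textsc{Knapsack} inside RSNC.

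Concretely, given a \textsc{Knapsack} instance with items $1,\dots,n$ of integer sizes $a_1,\dots,a_n$, values $v_1,\dots,v_n$, and capacity $C$, I would construct an RSNC instance with $m=n$ destinations and $n$ packets as follows. Set $R(d_j)=\{p_j\}$ and $H(d_j)=\emptyset$ for every $j$, so the required sets are disjoint singletons and no destination holds any side information. Keep the common packet size $B$ and set the maximum rate on link $(s,d_j)$ to $r(s,d_j)=B/a_j$, so that delivering $p_j$ to $d_j$ at its fastest admissible rate consumes time exactly $a_j$. Finally set every deadline equal, $T_{j,j}=C$, and set the benefit $\alpha_{j,j}=v_j$. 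This construction is clearly polynomial in the size of the \textsc{Knapsack} instance.

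Correctness would follow from three observations. First, coding provides no advantage here: any XOR of two or more packets cannot be decoded by any destination, since each $d_j$ wants a single packet and has empty side information, so every \emph{useful} transmission is a single uncoded packet. Second, for a single-destination packet it is never beneficial to transmit below the maximum rate, since a lower rate only lengthens the transmission time; hence the transmission time of $p_j$ may be taken to be exactly $a_j$. Third, because all deadlines equal $C$, a selected set $S$ of packets is schedulable on the channel if and only if $\sum_{j\in S} a_j \le C$, independent of ordering, as the last completion time equals the cumulative transmission time. Consequently the maximum achievable benefit $\sum_{j\in S}\alpha_{j,j}=\sum_{j\in S}v_j$ over feasible $S$ is exactly the optimum of the \textsc{Knapsack} instance, which yields the desired reduction and proves RSNC is NP-hard.

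The main obstacle I anticipate is making the ``no coding and maximum rate are without loss of generality'' argument fully rigorous, i.e.\ proving that no cleverly coded or rate-varied schedule can outperform the best uncoded, maximum-rate schedule on the constructed family. This requires a short structural lemma showing that in any feasible RSNC solution on these instances one may, without decreasing total benefit, replace every coded/slow transmission by uncoded maximum-rate transmissions of the packets it actually delivers on time. Once this normalization lemma is in hand, the equivalence with \textsc{Knapsack} is immediate and the hardness conclusion follows.
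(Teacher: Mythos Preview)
Your reduction is correct, and the normalization step you flag as the ``main obstacle'' is in fact immediate for the instance you build: with $H(d_j)=\emptyset$ and disjoint singleton request sets, any XOR of two or more packets is undecodable everywhere, and any rate below $r(s,d_j)$ for a single-packet transmission only inflates its duration. So the equivalence with \textsc{Knapsack} goes through cleanly.

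However, your route is quite different from the paper's. The paper takes the opposite restriction: it sets all link rates equal, all deadlines equal, and all benefits equal to $1$, so that rate selection and scheduling become trivial and the residual problem is exactly (the decision version of) index coding, whose NP-hardness is cited from \cite{Bar-Yossef,Bar-Yossef2}. In other words, the paper isolates the \emph{coding} difficulty, whereas you deliberately kill all coding opportunities and isolate the \emph{scheduling} difficulty. Both are valid special cases; they show that RSNC inherits hardness from two independent sources. The paper's argument is a one-line citation and, since index coding is a purely combinatorial problem, yields strong NP-hardness. Your \textsc{Knapsack} reduction is more self-contained and elementary, but as written establishes only weak NP-hardness (the numbers $a_j$ appear as transmission times). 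If you want to match the paper's strength with your style of argument, you could instead reduce from $1\,\|\,\sum w_jU_j$ with unit processing times and distinct deadlines, or from another strongly NP-hard scheduling variant---but for the bare statement ``RSNC is NP-hard'' your proof already suffices.
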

\begin{proof}
We consider a special case of the RSNC problem: $\alpha_{i,j}=1$,
$T_{i,j}$ is the same for $\forall i,j$, and the maximum
transmission rates on all the links are the same. Under the above
assumptions, the problem is reduced to NP index coding and the NP
hard nature of the problem is well known
\cite{Bar-Yossef,Bar-Yossef2}. Thus, the RSNC problem is also
NP-hard.
\end{proof}

According to the above lemma, we know that the complexity of finding the optimal solution of RSNC problem is exponential.

\subsection{Graph Model}\label{graph.model}
Although the graph model in \cite{ZX2010Broadcast6} works well for the case where the transmission rates on all the links are the same and fixed, it cannot be used directly for our RSNC problem. Here, we construct a novel graph model $G(V,E)$, which considers both the transmission rates and the packet reception deadlines.

We define $r_{min}(s,d_i|p_j)=\frac{B}{T_{i,j}}$ as the minimum transmission rate that can be used to meet the deadline of $p_j\in R(d_i)$ at $d_i$.
We add a vertex $v_{i,j}$ in $V(G)$, only if the following two conditions can be met.

(1) $p_j\in R(d_i)$;

(2) $r_{min}(s,d_i|p_j)\leq r(s,d_i)$.

Note that, if $r_{min}(s,d_i|p_j)> r(s,d_i)$, packet $p_j$ will definitely miss its deadline at $d_i$.
Thus, conditions (1) and (2) ensure that we add a vertex $v_{i,j}$ in $V(G)$ only if the ``wanted" packet $p_j$ may not miss its deadline at $d_j$.
That is, $V(G)=\{v_{i,j}|p_j\in R(d_i), r_{min}(s,d_i|p_j)\leq r(s,d_i)\}$.

Then, for any two different vertices $v_{i,j},v_{i',j'}\in V(G)$, there is an edge $(v_{i,j},v_{i',j'})\in E(G)$ if all the following conditions can be satisfied:

(a) $i\neq i'$;

(b) $j=j'$ or $p_j\in H(d_{i'})$ and $p_{j'}\in H(d_i)$;

(c) $r_{min}(s,d_i|p_j)\leq r(s,d_{i'})$ and $r_{min}(s,d_{i'}|p_{j'})\leq r(s,d_i)$.

We also define the weight of vertex $v_{i,j}$ as the benefit of packet $p_j$ at $d_i$, $\alpha_{i,j}$.
For any clique $Q=\{v_{i_1,j_1},v_{i_2,j_2},\cdots\}$ in $G$, let $P'=\{p_j|v_{i,j}\in Q\},D'=\{d_i|v_{i,j}\in Q\}$. Similar to the work in \cite{ZX2010Broadcast6}, if node $d_i\in D'$ successfully receives the encoded packet $p_{j_1}\oplus p_{j_2}\oplus \cdots\oplus p_{|P'|}$, where $p_{j_1}, p_{j_2},\cdots,p_{|P'|} \in P'$, $d_i$ can decode a ``wanted" packet $p_j$, where $v_{i,j}\in Q$.

Next, we will use an example to show the novelty of our graph model as compared to others in the literature, e.g., \cite{ZX2010Broadcast6}.
Still take Fig.~\ref{Fig.rsnc} as an example. The graph constructed by \cite{ZX2010Broadcast6} is shown in Fig.~\ref{Fig.example} (a). According to \cite{ZX2010Broadcast6}, any clique in the graph represents a feasible encoded packet. Thus, $p_1\oplus p_2\oplus p_3$ can be sent and its intended next hops are $d_1,d_2,d_3$, because $\{v_{1,1}$,$v_{2,2}$,$v_{3,3}\}$ forms a clique. As described before, it is not a good choice, as we cannot find a transmission rate to meet the deadlines of all the packets.
However, with our graph model shown in Fig.~\ref{Fig.example}(b), $p_1,p_2, p_3$ will not be encoded because vertices
$v_{1,1}$,$v_{2,2}$,$v_{3,3}$ do not form a clique in the graph. In addition, for the current transmission, the encoded packet derived from any clique in the graph can be sent without missing the deadlines at its intended destinations. For example, if $p_2\oplus p_3$, which is derived from the clique $\{v_{2,2},v_{3,3}\}$, is sent with the minimum of the possible transmission rates among $r(s,d_2)$ and $r(s,d_3)$, $2k/s$, its intended next hops $d_2,d_3$ can successfully decode the packets $p_2,p_3$ respectively without missing their deadlines.

\begin{figure}[t]
\begin{center}
\includegraphics[height=26mm,width=90mm]{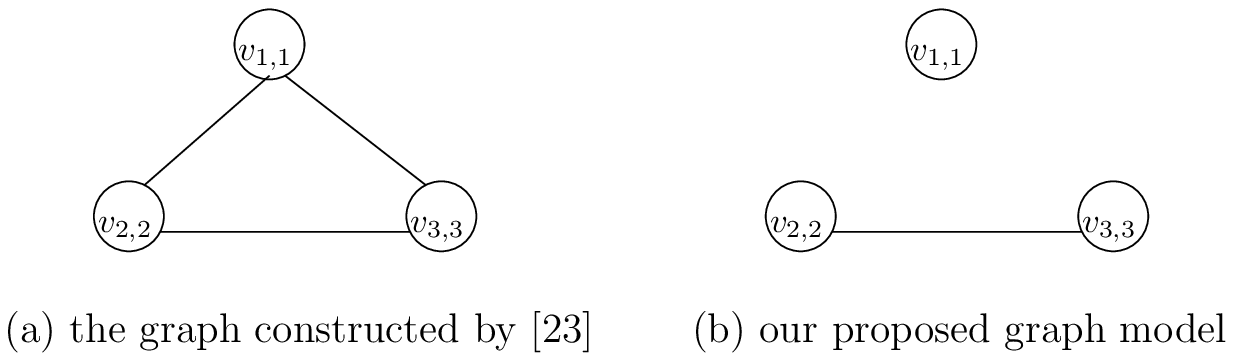}
\caption{Different graph model comparison}\label{Fig.example}
\end{center}
\end{figure}

For any clique $Q$ in the graph $G(V,E)$, we have the following lemma.
\begin{lemma}\label{lemma_graph}
In the current transmission, if the encoded packet $p_{j_1}\oplus p_{j_2}\oplus \cdots\oplus p_{|P'|}$, where $p_{j_1},\cdots,p_{|P'|} \in P'$ and $P'=\{p_j|v_{i,j}\in Q\}$, is sent with the transmission rate $r=\min\{r(s,d_i)|d_i\in D'\}$, it will be received by all the nodes in $D'$. Then, for each $v_{i,j}\in Q$, the packet $p_j$ will be decoded by $d_i$ without missing its deadline. In addition, the benefit obtained by such a transmission is the sum of the weights on the vertices in the clique, i.e., $\sum_{v_{i,j}\in Q}\alpha_{i,j}$.
\end{lemma}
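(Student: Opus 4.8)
The plan is to verify the three assertions of Lemma~\ref{lemma_graph} in turn, each following directly from the way the graph $G(V,E)$ was constructed. First I would establish that every node in $D'$ successfully receives the encoded packet when it is sent at rate $r=\min\{r(s,d_i)\mid d_i\in D'\}$. Since $r\leq r(s,d_i)$ for every $d_i\in D'$ by the definition of the minimum, and the channel model states that a packet is received by $d_i$ whenever the transmission rate does not exceed $r(s,d_i)$, reception at every destination in $D'$ is immediate.

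Next I would show that each $d_i$ can decode its wanted packet $p_j$ (where $v_{i,j}\in Q$) from the received XOR. The key observation is that $Q$ is a clique, so for every other vertex $v_{i',j'}\in Q$ the edge condition (b) holds: either $j'=j$, or $p_{j'}\in H(d_i)$. Thus every summand $p_{j'}$ in the encoded packet other than $p_j$ itself is already available in the buffer $H(d_i)$ of $d_i$. Consequently $d_i$ can XOR out all the interfering packets and recover $p_j$. Here I would invoke the decoding argument already sketched in Section~\ref{graph.model} immediately after the definition of a clique, so this step is essentially a restatement of that construction.

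The part I expect to require the most care is showing that $p_j$ meets its deadline at $d_i$, since this is where the novel rate-range tradeoff enters. The arrival time of the packet at $d_i$ is $B/r$, so I must argue that $B/r\leq T_{i,j}$, equivalently $r\geq r_{min}(s,d_i\mid p_j)=B/T_{i,j}$. Let $d_{i^*}\in D'$ be the destination attaining the minimum, so $r=r(s,d_{i^*})$. If $i^*=i$ the bound follows from condition~(2) in the vertex definition, namely $r_{min}(s,d_i\mid p_j)\leq r(s,d_i)=r$. If $i^*\neq i$, then since both $v_{i,j}$ and $v_{i^*,j^*}$ lie in the clique $Q$, the edge $(v_{i,j},v_{i^*,j^*})$ exists, and edge condition~(c) gives precisely $r_{min}(s,d_i\mid p_j)\leq r(s,d_{i^*})=r$. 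Either way $r\geq r_{min}(s,d_i\mid p_j)$, so $B/r\leq T_{i,j}$ and the deadline is met. This case analysis on whether $i$ is the rate-determining destination is the crux: it is exactly why edge condition~(c) was built symmetrically into the graph.

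Finally, the benefit claim is immediate from the weighting scheme. Each vertex $v_{i,j}$ carries weight $\alpha_{i,j}$, and by the first three parts every $v_{i,j}\in Q$ corresponds to a packet $p_j$ that is decoded at $d_i$ before its deadline, contributing $\alpha_{i,j}$ to the objective in~(1). Summing over the clique gives the total benefit $\sum_{v_{i,j}\in Q}\alpha_{i,j}$, completing the proof.
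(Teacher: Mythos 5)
Your proposal is correct and follows essentially the same three-step decomposition as the paper's proof: reception via the definition of the minimum rate, decoding via the clique/side-information structure (the paper cites the decoding property of \cite{ZX2010Broadcast6} where you spell out the XOR cancellation), and the deadline via $r_{min}(s,d_i|p_j)\leq r$. Your explicit case split on whether $d_i$ itself attains the minimum rate (invoking vertex condition~(2) versus edge condition~(c)) is in fact slightly more careful than the paper, which attributes the whole inequality to condition~(c) alone.
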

\begin{proof}
Firstly, we can easily obtain that with transmission rate $r=\min\{r(s,d_i)|d_i\in D'\}$, all the receivers in $D'$ can successfully receive the sending packet. This is because the transmission rate $r$ must be lower than the maximum transmission rate from $s$ to any $d_i\in D'$.

Secondly, the constructed graph satisfies the decoding property given in \cite{ZX2010Broadcast6}. According to \cite{ZX2010Broadcast6}, if $p_{j_1}\oplus p_{j_1}\oplus \cdots\oplus p_{j_{|P'|}}$ is successfully received by $d_i \in D'$, $d_i$ can decode its ``wanted" packet $p_j$, where $v_{i,j}\in Q$.
Thus, any receiver $d_i\in D'$ can obtain a ``wanted" packet $p_j$ from $p_{j_1}\oplus p_{j_2}\oplus \cdots\oplus p_{j_{|P'|}}$ with transmission rate $r$, where $v_{i,j}\in Q$.

Thirdly, according to the condition (c), we have
\begin{eqnarray}
r_{min}(s,d_i|p_j)\leq \min_{d_i\in D'}\{r(s,d_i)\}=r
\end{eqnarray}
So, its arrival time at receiver $d_i$ is
\begin{eqnarray}
\frac{B}{r}\leq \frac{B}{r_{min}(s,d_i|p_j)}=\frac{B}{\frac{B}{T_{i,j}}}=T_{i,j}
\end{eqnarray}
In other words, the arrival time of the packet $p_j\in P'$ will not miss its deadline at its receiver $d_i\in D'$, where $v_{i,j}\in Q$.
\end{proof}

With Lemma~\ref{lemma_graph}, a clique $Q$ in the graph represents a feasible transmission solution for the current propagation, with the encoded packet $p_{j_1}\oplus p_{j_2}\oplus \cdots\oplus p_{j_{|P'|}}$, transmission rate $r=\min\{r(s,d_i)|d_i\in D'\}$, the propagation delay $\frac{B}{r}$, and the benefit $\sum_{v_{i,j}\in q}\alpha_{i,j}$.

\subsection{RSNC Formulation}\label{formulation}
While Lemma~\ref{lemma_graph} ensures that any encoding strategy based on any clique in the graph will be delivered within deadline for the current packet transmission, the transmission orders of the encoded packets, represented by the cliques in $G(V,E)$, is important for the timely packet receptions at their destinations.

For example, as shown in Fig.~\ref{Fig.example}(b) where we assume that $\alpha_{i,j}=1$ for $\forall i,j$, if we first schedule packet $p_1$ with transmission rate $5k/s$, represented by clique $\{v_{1,1}\}$, and then schedule packet $p_2\oplus p_3$ with transmission rate $2k/s$, represented by clique $\{v_{2,2},v_{3,3}\}$, all the packets will be received at their destinations without missing deadlines. The total benefit with such a solution is $\alpha_{1,1}+\alpha_{2,2}+\alpha_{3,3}=3$. However, if we first schedule packet $p_2\oplus p_3$, and then packet $p_1$, packet $p_1$ will miss its deadline at $d_1$. Correspondingly, the total benefit with the latter solution is only $\alpha_{2,2}+\alpha_{3,3}=2$.

Thus, our next task is to find a set of cliques in the graph and schedule the transmissions of the encoded packets represented by these cliques, so as to maximize the total benefit of the packets that are successfully received/decoded at their destinations without missing their deadlines.
Suppose that $Q_h=\{v_{i_1,j_1},v_{i_2,j_2},\cdots,\}$ is the $h$-th clique found in the graph, and the corresponding encoded packet represented by $Q_h$ is sent
as the $h$-th transmission at node $s$. We also assume that $P'_h=\{p_j|v_{i,j}\in Q_h\}$, $D'_h=\{d_i|v_{i,j}\in Q_h\}$. Thus, the packet sent by the $h$-th transmission at $s$ is $p_{j_1}\oplus p_{j_2}\oplus\cdots \oplus p_{j_{|P'_h|}}$ where $p_{j_1},p_{j_2},\cdots,p_{j_{|P'_h|}}\in P'_h$, and the transmission rate is $r_h=\min_{d_i\in D'_h}\{r(s,d_i)\}$. Let $T'_h$ be the transmission delay of the $h$-th transmission, i.e., $T'_h=\frac{B}{r_h}$.

We first define the following variant.
\begin{equation}
x_{i,j,h}=\left\{
\begin{aligned}
&1,\mbox{if vertex}~v_{i,j}~\mbox{is included in clique}~Q_h\\
&0,\mbox{otherwise}
\end{aligned}
\right.
\end{equation}
Then, we can formulate the RSNC problem based on the graph model as follows.
\begin{eqnarray}\label{formulation.obj}
\max_{\{Q_h\}} & &\sum_{d_i\in D}\sum_{p_j\in R(d_i)}\alpha_{i,j}z_{i,j}
\end{eqnarray}subject to \begin{eqnarray}
&&\sum_{h=1}^{|V(G)|}x_{i,j,h}=1,\forall v_{i,j}\in V(G)\label{ch.1}\\
&&x_{i,j,h}+x_{i',j',h}=1,\forall h\in \{1,2,\cdots,|V(G)|\},\\
&& \forall (v_{i,j},v_{i',j'})\notin E(G)\label{ch.2}\\
&&T'_h=\max_{v_{i,j}\in V}\{\frac{B*x_{i,j,h}}{r(s,d_i)}\},1\leq h\leq |V(G)|\label{ch.3}\\
&&\sum_{h=1}^{|V(G)|}(x_{i,j,h}*\sum_{j=1}^hT'_j)\leq T_{i,j}+\xi z_{i,j},\forall v_{i,j}\label{ch.4}\\
&&\sum_{h=1}^{|V(G)|}(x_{i,j,h}*\sum_{j=1}^hT'_j)\geq T_{i,j}-\xi (1-z_{i,j}),\forall v_{i,j}\label{ch.5}\\
&&z_{i,j}\in\{0,1\},\forall i,j\label{ch.6}
\end{eqnarray}where $\xi$ is a sufficient large constant.

In the above formulation, the term of the objective represents the total benefit of the packets that are successfully received at their destinations without missing their deadlines, which needs to be maximized.
Constraint~(\ref{ch.1}) denotes that each vertex in the graph can only belong to one clique. Constraint~(\ref{ch.2}) means that if there is no edge between vertex $v_{i,j}$ and $v_{i',j'}$, vertices $v_{i,j},v_{i',j'}$ cannot be in the same clique. Constraint~(\ref{ch.3}) gives the transmission delay for the $h$-th transmission, which is equal to the transmission delay with the minimum transmission rate among the rates from $s$ to all intended receivers. The sufficient large constant $\xi$ is used to guarantee that if $\sum_{h=1}^{|V|}(x_{i,j,h}*\sum_{j=1}^hT'_j)>T_{i,j}$, $z_{i,j}$ must be $1$, as denoted in Constraint~(\ref{ch.4}), and if $\sum_{h=1}^{|V|}(x_{i,j,h}*\sum_{j=1}^hT'_j)\leq T_{i,j}$, $z_{i,j}$ must be $0$, as denoted in Constraint~(\ref{ch.5}). Note that the arrival time of the packet in the $h$-th transmission should consist of both the waiting time of the previous $h-1$ transmissions and the transmission time of the $h$-th transmission, i.e., $\sum_{j=1}^h T'_j$.
Thus, Constraint~(\ref{ch.4}) and~(\ref{ch.5}) show that $z_{i,j}$ can be $0$ only if the arrival time of $p_j$ at $d_i$, i.e., $\sum_{h=1}^{|V|}(x_{i,j,h}*\sum_{j=1}^hT'_j)$, is no more than the reception deadline of packet $p_j$ at destination $d_i$.

With the above integer programming, we can get the optimal solution of RSNC problem. However, the computational complexity for the above integer programming is too high when the graph is large. Thus, we need to design an efficient heuristic algorithm to get a sub-optimal solution.

\section{Joint Rate Selection and Network Coding Algorithm}\label{Sec.algorithm}
Since every clique in the graph represents a feasible transmission strategy for the current transmission, we first design an algorithm to determine the encoding strategy and rate selection scheme for each packet propagation, by selecting a clique at a time. The whole transmission process will consist of multiple packets transmission/cliques selection, and will be introduced in Section~\ref{Sec.algorithm.design.whole}.

\subsection{Metric Consideration for Each Packet Propagation}\vspace{-0.03in}
First of all, in order to measure the ``goodness" of transmitting an encoded packet at a specific transmission rate for each packet propagation, it is necessary for us to adopt a reasonable metric which should take into account the impact of the transmission rate and the packet reception deadlines.
In this section, we shall design a metric, which not only satisfies as more requests as possible, but also minimizes the number of packets missing the deadlines after the current transmission.

For the current transmission, we define the following metric.
\begin{DD}
For an encoded packet $p_{j_1}\oplus p_{j_2}\oplus \cdots \oplus p_{j_L}$ sent with the transmission rate $r$, we define the metric $U$ as follows:
\begin{eqnarray}\label{equation.U}
U=\sum_{d_i\in D}\sum_{p_j\in R(d_i)} \alpha_{i,j} f_{i,j}- \sum_{d_i\in D}\sum_{p_j\in R(d_i)}\alpha_{i,j} l_{i,j}
\end{eqnarray}
\end{DD}where $f_{i,j}=1$ ($f_{i,j}=0$) denotes packet $p_j$ can (cannot) be decode/received by $d_i$ from the current transmission without missing its deadline, and $l_{i,j}=1$ ($l_{i,j}=0$) represents that packet $p_j$ will (will not) definitely miss its deadline after the current transmission.

The meaning of metric $U$ in Eq.~(\ref{equation.U}) can be explained
as follows. The first term $\sum_{d_i\in D}\sum_{p_j\in
R(d_i)}\alpha_{i,j} f_{i,j}$ denotes the benefit obtained from the
packets that are received without missing their deadlines from the
current transmission. The second term $\sum_{d_i\in D}\sum_{p_j\in
R(d_i)}\alpha_{i,j} l_{i,j}$ represents the lost for these packets
that will definitely miss their deadlines after the current
transmission. So, the metric $U$ denotes the net benefit obtained
from the current encoded packet and the transmission rate. For each
packet propagation, we aim to determine an encoded packet and select
the transmission rate $r$ that will maximize the metric $U$. In
other words, we would like to propose a greedy solution to optimize
the current performance only.

For the current transmission, given encoded packet $p_{j_1}\oplus p_{j_2}\oplus \cdots \oplus p_{j_L}$ and the transmission rate $r$, $f_{i,j}$ and $l_{i,j}$ can be both determined. For example,
$f_{i,j}$ is $1$ if and only if all the following conditions are met:
\begin{itemize}
\item $r\geq r(s,d_i)$, which means $d_i$ can successfully receive the sending packet;
\item $p_j\in R(d_i)$, which means $p_j$ is required by $d_i$;
\item All the other packets encoded in the current packet except $p_j$ are available at $d_i$, which is the decoding requirement of $p_j$ at $d_i$;
\item $\frac{B}{r}\leq T_{i,j}$, which shows the requirement of the reception deadline.
\end{itemize}
In addition, $l_{i,j}$ is $1$ if and only if for $\forall p_j\in R(d_i)$,
\begin{eqnarray} \label{eq.miss}
\frac{B}{r}+\frac{B}{r(s,d_i)}>T_{i,j}
\end{eqnarray}
Here, $\frac{B}{r}$ is the transmission delay of the current transmission, and $\frac{B}{r(s,d_i)}$ denotes the minimum delay to meet $p_j$'s deadline at $d_i$ in the next transmission. If the sum of the current transmission delay and the next minimum transmission delay is larger than the deadline of $p_j$ at $d_i$, $p_j$ will definitely miss its deadline, i.e., $l_{i,j}=1$.

Note that, the problem of maximizing the defined metric $U$ is also NP-hard. We can prove it by considering its special case: the transmission rates on all the links are the same,  the reception deadlines for all the packets are the transmission time of one packet, and each packet has the same benefit $\alpha_{i,j}$. The special case of maximizing the defined metric $U$ becomes to maximize the total number of the receivers that can decode one ``wanted" packet from the current encoded packet, which has been proved to be NP-hard in \cite{XiuminWang2010}.

\subsection{Heuristic Algorithm Design for Each Packet Propagation}\label{Sec.algorithm.design}
Although maximizing the defined metric $U$ is NP-hard, we can easily obtain the following observations, based on which we can design a heuristic algorithm.

1: Maximizing the first term of the metric $U$ is equal to find a maximum weight clique in the graph, where the weight at vertex $v_{i,j}$ is defined as the benefit $\alpha_{i,j}$.

2: The transmission rate is a parameter that adjusts the trade-off between delay and network coding gain. If $s$ uses a low transmission rate, more receivers can successfully receive the sending packet, and the current transmission may satisfy more receivers' requirements, denoted by the first term in $U$. However, low transmission rate means high transmission delay, which may cause more packets to miss their deadlines in the following transmissions, denoted by the second term in $U$.

Based on the above observations, we then design a heuristic algorithm for each packet propagation, by gradually increasing the transmission rate. Initially, the transmission rate is set to be no less than the lowest one from $s$ to its receivers. Let $TR=\{r(s,d_i)|d_i\in D\}$ be the set of available transmission rates from $s$ to all the destinations, and let $Tr_k$ be the $k$-th lowest rate in $TR$. As in Section~\ref{graph.model}, we construct the auxiliary graph with the given information.


In the $k$-th step, we restrict that the transmission rate used at $s$ must be no less than $Tr_k$. For $d_i$, if its maximum transmission rate from $s$ is less than $Tr_k$, it cannot successfully receive the sending packet. This restriction can be realized by omitting any vertex $v_{i,j}$ in $G(V,E)$ if $r(s,d_i)< Tr_k$. Then, we find the maximum weight clique in the subgraph $\{v_{i,j}|r(s,d_i)\geq Tr_k,v_{i,j}\in V(G)\}$, and adopt the transmission rate represented by the found clique. Each vertex $v_{i,j}$ in the found clique denotes that $p_j$ will be successfully received/decoded by $d_i$ without missing its deadline, for the given transmission rate. For each of the other packets that cannot be obtained at their receivers from the current transmission, we then judge whether it will definitely miss its deadline at its destinations, by (\ref{eq.miss}). Thus, in each step, we calculate $U$. Such process continues until all the rates in $TR$ are considered. Finally, we compare the values of $U$ obtained from each step and adopt the one with the largest value as the solution.
Note that, if there are more than one solution with the maximum value of $U$, we will choose the one with the smaller lost represented by the second term of (\ref{equation.U}).
The detailed of the algorithm is shown in Algorithm \ref{Alg.single}.

\subsection{Algorithm for the Whole Transmission Process}\label{Sec.algorithm.design.whole}
While Algorithm 1 in Section~\ref{Sec.algorithm.design} describes the encoding strategy of the packets and the selection of the transmission rate for every propagation, the whole transmission process will consist of multiple of such single process. We will first construct the graph $G(V,E)$ based on the model in Section~\ref{graph.model}, and the graph will be updated by removing the selected vertices in the found clique by Algorithm \ref{Alg.single}, and the vertex $v_{i,j}$ if $p_j$ will definitely miss its deadline at destination $d_i$. The packet reception deadlines for the packets also need to be updated after each transmission. The whole transmission process continues until the vertices set $V$ of $G$ becomes empty. The detail algorithm for the whole transmission is given in Algorithm \ref{Alg.whole}.

\begin{algorithm}[t]
\Begin
{

    $U_k=0$, $\forall k\in \{1,2,\cdots,|TR|\}$\;
    $f^k_{i,j}=l^k_{i,j}=0$, $\forall k\in \{1,2,\cdots,|TR|\},v_{i,j}\in V(G)$\;
    \For{$k\longleftarrow 1$ to $|TR|$}
    {
        find a max weight clique $Q_k$ in the subgraph $\{v_{i,j}|r(s,d_i)\geq Tr_k, v_{i,j}\in V\}$\;
        $f^k_{i,j}=1$, if $v_{i,j}\in Q_k$, for $\forall i,j$\;
        $r'_k=\min_{v_{i,j}\in Q_k}\{r(s,d_i)\}$\;
        \For{each $v_{i,j}\in V(G),v_{i,j}\notin Q_k$}
        {
             \If{ $\frac{B}{r'_k}+\frac{B}{r(s,d_i)}>T_{i,j}$}
             {
                $l^k_{i,j}=1$,
             }
        }
        $U_k=\sum_{d_i\in D}\sum_{p_j\in R(d_i)} \alpha_{i,j} f^k_{i,j}- \sum_{d_i\in D}\sum_{p_j\in R(d_i)}\alpha_{i,j} l^k_{i,j}$\;
    }
    add $Q_k$ into $\mathcal{Q}$ if $U_k$ is the maximum among $\{U_k|0\leq k\leq |TR|\}$\;
    $Q=\arg\min_{Q_k}\{\sum_{d_i\in D}\sum_{p_j\in R(d_i)}\alpha_{i,j} l^k_{i,j}|Q_k\in \mathcal{Q}\}$\;
    the current encoded packet is $\bigoplus_{v_{i,j}\in Q}p_j$\;
    the current transmission rate is $r=\min_{v_{i,j}\in Q}\{r(s,d_i)\}$;
    \For{each $v_{i,j}\in Q$}
    {
        delete $p_j$ from $R(d_i)$\;
        add $p_j$ to $H(d_i)$\;
    }
}
\caption{Algorithm design for one packet propagation process}\label{Alg.single}
\end{algorithm}

\begin{algorithm}[t]
\Begin
{
    construct graph $G(V,E)$\;
    \While{$V(G)$ is not empty}
    {
        conduct Algorithm 1 for the current packet propagation\;
        remove the selected clique from $G(V,E)$\;
        remove the vertex $v_{i,j}$ from $V(G)$ if $l_{i,j}=1$\;
        update the packet reception deadline, e.g., $T_{i,j}=T_{i,j}-\frac{B}{r}$\;
    }
}
\caption{Algorithm design for the whole packet transmission process}\label{Alg.whole}
\end{algorithm}

\subsection{Algorithm Complexity}
In this section, we will analyze the complexity of the proposed algorithm.

We first discuss the complexity of the algorithm for single packet propagation. According to Algorithm~\ref{Alg.single}, for each available transmission rate, the encoding strategy and rate selection is converted into finding a maximum weight clique in the defined subgraph. As finding a maximum weight clique in the graph is also an NP-hard problem, in this paper, we exploit a heuristic algorithm \cite{XiuminWang2010}, whose complexity is $O(|V(G)|^3)$. By trying all the available transmission rates in $TR$, the complexity of determining single packet propagation is $O(|V(G)|^3|TR|)$.

As the maximum number of transmissions is at least the number of packets in $P$, the complexity of the algorithm for the whole transmission process is $O(|V(G)|^3|TR||P|)$. In other words, the complexity of the above proposed algorithm is $O(|V(G)|^3|TR||P|)$.

\section{Pairwise Coding for the Same Delay Constraint}\label{Sec.special}

We now consider a special case when the delay constraints for all the packets are the same, e.g. in a video broadcasting, all packets have the same delay constraints to all the clients. As in \cite{KV2009Is646,Chaudhry2011}, we focus on a practically coding scheme, {\em pairwise coding} (or sparse coding in \cite{Chaudhry2011}), so as to decrease the encoding and decoding complexity. In this section, we first give a brief review on pairwise coding. Then, we design a greedy algorithm, and prove that the proposed algorithm achieves at least $(1-\frac{1}{e})$ of the optimal solution to pairwise coding, where $e$ is the base of the natural logarithm.

\subsection{Pairwise Coding}
With pairwise coding \cite{KV2009Is646,Chaudhry2011}, at most two requests of the destinations can be satisfied within one transmission, i.e., at most two packets are encoded together. As in Section~\ref{graph.model}, each clique in the graph model $G(V,E)$ represents a feasible packet transmission. In other words, with pairwise coding, the size of the clique selected for each transmission is at most two.

Without loss of generality, let $T$ be the delay constraint for all the packets.
Based on the graph $G(V,E)$ constructed in Section~\ref{graph.model}, we can enumerate all the cliques with at most size two as follows:
\begin{itemize}
\item For each vertex $v_{i,j}\in V(G)$, we add a unique clique $Q_h=\{v_{i,j}\}$;

\item For each edge $(v_{i,j},v_{i',j'})\in E(G)$, we add a unique clique $Q_{h'}=\{v_{i,j},v_{i',j'}\}$.
\end{itemize}
Let $\mathcal{Q}$ be the set of all the cliques defined above. For each element (vertex) included in the clique, we also define $\alpha_{i,j}$ as its weight. Then, the weight of the clique $Q_h\in \mathcal{Q}$, noted as $w_h$, can be defined as the sum of the weights on the elements covered by it, i.e., $w_h=\sum_{v_{i,j}\in Q_h}\alpha_{i,j}$. Note that in the following, the term ``weight" has the same meaning as the term ``benefit".
For each clique $Q_h\in \mathcal{Q}$, we also define the cost $c_h$ as the maximum transmission delay to satisfy the requests represented by the vertices in $Q_h$, i.e., $c_h=\max_{v_{i,j}\in Q_h}\{\frac{B}{r(s,d_i)}\}$.

With the above definition, the cost of clique $Q_h$ denotes the transmission delay of sending the encoded packet represented by $Q_h$, and the weight of clique $Q_h$ denotes the benefit of scheduling the encoded packet represented by $Q_h$ before deadline $T$.
Then, pairwise coding to maximize the total benefit is converted into finding a collection of cliques in $\mathcal{Q}^*\subseteq \mathcal{Q}$ such that total weight of all the selected cliques in $\mathcal{Q}^*$ is maximum, while the following conditions need to be satisfied:

(1) The total cost of the cliques in $\mathcal{Q}^*$ does not exceed a given budget $T$, i.e., $\sum_{Q_h\in \mathcal{Q}^*}c_h\leq T$, which denotes the delay constraint;

(2) For each two cliques $Q_{h}$ and $Q_{h'}$ in $\mathcal{Q}^*$, they must satisfy $Q_{h}\bigcap Q_{h'}=\emptyset$, which means that the satisfied request will be excluded in the following transmissions.

To ease understanding, we can formulate the above problem as follows.
\begin{eqnarray}\label{ILP}
\max_{\mathcal{Q}^*\subseteq \mathcal{Q}} \sum_{Q_h\in \mathcal{Q}^*} w_h
\end{eqnarray}subject to
\begin{eqnarray}\label{ILP.constraint}
&&\sum_{Q_h\in \mathcal{Q}^*}c_h\leq T\notag\\
&& Q_h\bigcap Q_{h'}=\emptyset,\forall Q_h,Q_{h'}\in \mathcal{Q}^*
\end{eqnarray}

We can easily prove that the above pairwise coding problem is NP hard, by reduction a classical NP problem, $0-1$ knapsack problem, to a special case of the proposed problem when there is no edge in graph $G(V,E)$.

\subsection{Algorithm Design}
We design a greedy algorithm to realize the pairwise coding, which is conducted with iterations.
To facilitate the further discussion, we define the following parameters.
\begin{itemize}
\item $\mathcal{Q}^*_k$: the collection of cliques selected in the first $k$ iterations, where $\mathcal{Q}^*_k\subseteq \mathcal{Q}$.
\item $C(\mathcal{Q}^*_k)$: the sum of the cost for the cliques in $\mathcal{Q}^*_k$.
\item $W(\mathcal{Q}^*_k)$: the total weight of the elements covered by the cliques in $\mathcal{Q}^*_k$.
\item $w^*_{k}(Q_h)$: the total weight of the elements covered by clique $Q_h$, but not covered by any clique in $\mathcal{Q}^*_{k}$.
\end{itemize}
Without loss of generality, we assume that the cost of each clique in $\mathcal{Q}$ is no more than $T$, since the cliques whose costs are more than $T$ must not belong to any feasible solution.

Initially, suppose that $U^*=\mathcal{Q}$ and $Q_0=\emptyset$. Our algorithm is conducted with iterations.
\begin{itemize}
\item In the $k$-th iteration, we consider the clique $Q_h$ in $U^*$ that maximizes the ratio $\frac{w^*_{k-1}(Q_h)}{c_{h}}$.
\item If adding cost $c_h$ to $C(\mathcal{Q}^*_{k-1})$ is more than $T$, discard the clique $Q_h$ from $U^*$.
\item If adding cost $c_h$ to $C(\mathcal{Q}^*_{k-1})$ is less than $T$, we then check if $Q_h$ has intersection with the selected cliques in $\mathcal{Q}^*_{k-1}$. If there is no intersection, add the clique $Q_h$ to the found collections, i.e., $\mathcal{Q}^*_{k}=\mathcal{Q}^*_{k-1}\bigcup \{Q_h\}$. Correspondingly, we delete $Q_h$ from $U^*$.
\item If there is intersection and the intersection set is not empty, there must exist another clique $Q_{h'}$ left in $U^*$, which includes the elements covered by clique $Q_h$ but not covered by any clique in $\mathcal{Q}^*_{k-1}$, i.e., $Q_{h'}=Q_h-Q_h\bigcap \{v_{i,j}|v_{i,j}\in Q_{h''}{\mbox{ and }}Q_{h''}\in \mathcal{Q}^*_{k-1}\}$. We then add clique $Q_{h'}$ to the selected collection, i.e., $\mathcal{Q}^*_k=\mathcal{Q}^*_{k-1}\bigcup Q_{h'}$, and delete both $Q_h$ and $Q_{h'}$ from $U^*$.
\item If there is intersection but the intersection set is empty, we just delete $Q_h$ from $U^*$.
\item The above process continues until $U^*=\emptyset$.
\end{itemize}
Finally, we compare the total weight of the elements covered by the selected cliques so far and the maximum weight of the clique in $\mathcal{Q}$, and keep the maximum one as the solution. The detail algorithm is shown in Algorithm~\ref{greedy2}.

\begin{algorithm}[t]
\Begin{

    $\mathcal{Q}^*_k\leftarrow \emptyset,C(\mathcal{Q}^*_k)=0,w^*_h(Q_h)=0, U^*\leftarrow \mathcal{Q}$, for $\forall k, h$\;
    $k=0$\;
    \While{$U^*\neq \emptyset$}
    {
        $k=k+1$\;
        Select $Q_h\in U^*$ that maximizes $\frac{w^*_{k-1}(Q_h)}{c_h}$\;
        \If{$C(\mathcal{Q}^*_{k-1})+c_h> T$}
        {
            $\mathcal{Q}^*_{k}\leftarrow \mathcal{Q}^*_{k-1}$\;
            $C(\mathcal{Q}^*_k)=C(\mathcal{Q}^*_{k-1})$\;
        }
        \Else
        {
            \If{there is no interaction between $Q_h$ and cliques in $\mathcal{Q}^*_k$}
            {
                $\mathcal{Q}^*_{k}\leftarrow \mathcal{Q}^*_{k-1}\bigcup \{Q_h\}$\;
                $C(\mathcal{Q}^*_k)=C(\mathcal{Q}^*_{k-1})+c_h$\;
            }
            \Else
            {
                \If{there exists another clique $Q_{h'}\in U^*$ that covers elements in $Q_h$ but not in any clique in $\mathcal{Q}^*_k$}
                {
                    $\mathcal{Q}^*_{k}=\mathcal{Q}^*_{k-1}\bigcup \{Q_{h'}\}$\;
                    $C(\mathcal{Q}^*_k)=C(\mathcal{Q}^*_{k-1})+c_{h'}$\;
                    $U^*\leftarrow U^*-Q_{h'}$\;
                }
            }
        }
        $U^*\leftarrow U-Q_h$\;
    }
    Consider the clique $Q_h$ that maximize $w_h$ over $\mathcal{Q}$\;
    If $W(\mathcal{Q}^*_k)\geq w_h$, output $\mathcal{Q}^*_k$, otherwise, output $\{Q_h\}$\;
} \caption{Algorithm design for pairwise coding}\label{greedy2}
\end{algorithm}

\subsection{Performance Analysis}
Before analyzing the performance of our algorithm, we first give a brief review on a classic budgeted set coverage problem \cite{Khuller1999}, which is similar to our problem and will be used in the following analysis.

\subsubsection{Budgeted Set Coverage Problem} The objective of the budgeted set coverage problem is to find a collection of the sets in $\mathcal{Q}$ such that the total weight of the elements covered by the selected sets is maximum while the total cost of all the selected sets is no more than a given value $T$. Compared with our problem, the budgeted set coverage problem does not require that the selected sets (i.e., cliques in our case) should be disjoint with each other. In the following presentation, we use the cliques and sets to denote the same meaning.

The algorithm in \cite{Khuller1999} is conducted with iterations. Let $\mathcal{Q}'_k\subseteq \mathcal{Q}$ be a collection of the cliques selected in the first $k$ iterations. Let $w'_{k}(Q_h)$ denote the total weight of the elements covered by $Q_h$, but not covered by any clique in $\mathcal{Q}'_{k}$. Initially, suppose that $U'=\mathcal{Q}$.
The main idea of the algorithm in \cite{Khuller1999} is that in each iteration $k$, the unconsidered clique $Q_h\in U'$ that maximizes the ratio of weight $w'_{k-1}(Q_h)$ to its cost (i.e., $\frac{w'_{k-1}(Q_h)}{c_h}$) is considered. If adding $c_h$ exceeds $T$, discard $Q_h$, otherwise, add $Q_h$ to the collection of selected cliques so far. Also, delete the considered clique $Q_h$ from $U'$. The above operation continues until all the cliques in $U'$ are considered.

Although the algorithm in \cite{Khuller1999} conducts well for the budgeted set coverage problem, it cannot be used to solve our problem.
This is because our problem requires the elements included in each pair of cliques in $\mathcal{Q}'$ to be disjoint with each other, as denoted in Constraint (2),
which differentiates from the budgeted set coverage problem.

\subsubsection{Analysis of the Proposed Pairwise Coding Solution}
Without loss of generality,
let $Q_{j'_t}$ and $Q_{j^*_t}$ be the $t$-th selected cliques with algorithm in \cite{Khuller1999} and our Algorithm~\ref{greedy2} respectively. Let $l'_t$ and $l^*_t$ be the index of the iteration in which clique $Q_{j'_t}$ and clique $Q_{j^*_t}$ are selected respectively. We assume that $\mathcal{Q}'=\{Q_{j'_1},Q_{j'_2},\cdots,Q_{j'_{q'}}\}$ is the collection of all the selected cliques with algorithm in \cite{Khuller1999}, and $\mathcal{Q}^*=\{Q_{j^*_1},Q_{j^*_2},\cdots,Q_{j^*_{q^*}}\}$ is the collection of all the selected cliques with Algorithm~\ref{greedy2}. Correspondingly, we let $S'_k$ and $S^*_k$ be the set of elements covered by the cliques in $\mathcal{Q}'_k$ and $\mathcal{Q}^*_k$ respectively.

Without loss of generality, we assume that each clique $Q_{j'_t}\in \mathcal{Q}'$ (or $Q_{j^*_t}\in \mathcal{Q}^*$) satisfies that $w^*_{l'_{t}-1}(Q_{j'_t})>0$ (or $w^*_{l^*_{t}-1}(Q_{j^*_t})>0$), otherwise, adding it has no contribution in increasing the total benefit (i.e., weight) and thus does not need to be considered.
Let $Q_{i'_{k}}$ and $Q_{i^*_{k}}$ be the original cliques considered in the $k$-th iteration whose ratio $\frac{w'_{k-1}(Q_{i'_{k}})}{c_{i'_{k}}}$, $\frac{w^*_{k-1}(Q_{i^*_{k}})}{c_{i^*_{k}}}$ are the maximum among all the unconsidered cliques so far in $U'$ and $U^*$ with algorithm in \cite{Khuller1999} and Algorithm~\ref{greedy2} respectively.

By comparing $Q_{i'_{k}}$ and $Q_{i^*_{k}}$, we can obtain the following lemma.
\begin{lemma}\label{lemma1}
At the $k$-th iteration, the clique $Q_{i'_{k}}$ considered by the algorithm in \cite{Khuller1999} is the same as the clique $Q_{i^*_{k}}$ considered by our Algorithm~\ref{greedy2}, i.e., $Q_{i'_{k}}=Q_{i^*_{k}}$, where $1\leq k\leq\min\{l'_{|\mathcal{Q}'|}, l^*_{|\mathcal{Q}^*|}\}$.
\end{lemma}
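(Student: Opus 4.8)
The plan is to argue by induction on the iteration index $k$, carrying along a stronger invariant than the statement itself: that up through the $(k-1)$-st iteration the two runs have (i) considered exactly the same cliques, (ii) produced the same set of already-covered vertices, and (iii) consumed the same budget. Invariant (ii) is the engine of the proof, because the quantity that decides which clique is ``considered'' at iteration $k$ is the weight-to-cost ratio $\frac{w_{k-1}(Q_h)}{c_h}$, whose numerator depends only on which vertices are already covered while the cost $c_h$ is fixed once and for all. Hence, once I establish that both runs face the same covered set and the same effective candidate pool, they must select the same maximizing clique, giving $Q_{i'_k}=Q_{i^*_k}$.

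For the base case $k=1$ this is immediate, since both runs start from $U'=U^*=\mathcal{Q}$ with no vertex covered, so the ratios and therefore their maximizers coincide. For the inductive step I would replay the common clique $Q_{i'_{k-1}}=Q_{i^*_{k-1}}$ through both algorithms and verify, by cases, that the covered set grows identically: if the clique exceeds the remaining budget both runs discard it and the covered set is unchanged; if it is accepted with no overlap both runs cover exactly its vertices; and if it is accepted with an overlap, the run of \cite{Khuller1999} covers the whole clique while Algorithm~\ref{greedy2} covers only its uncovered part $Q_{h'}$, yet because the overlapping vertices are already covered the two resulting covered sets are again equal. I would also note that the extra clique $Q_{h'}$ deleted from the pool of Algorithm~\ref{greedy2} is harmless: its vertices are now covered in both runs, so its marginal weight is zero from this point on, and a zero-marginal clique never attains the maximum ratio and so is never the considered clique.

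The delicate point, and the one I expect to be the real obstacle, is preserving invariant (iii) through an overlap. When Algorithm~\ref{greedy2} substitutes $Q_{h'}$ for an overlapping clique it charges the cost $c_{h'}\le c_h$, potentially strictly smaller than the cost $c_h$ charged by the run of \cite{Khuller1999}; a single such discrepancy could later let one run pass a budget test that the other fails, at which point the covered sets would diverge and the induction would collapse. To close this gap I would show that a genuinely overlapping clique is never the \emph{accepted} maximizer: if $Q_h=\{v_{i,j},v_{i',j'}\}$ with $v_{i,j}$ already covered and $v_{i',j'}$ not, then the singleton $\{v_{i',j'}\}$ has marginal weight $\alpha_{i',j'}$ and cost $\frac{B}{r(s,d_{i'})}\le c_h$, so its ratio is at least that of $Q_h$. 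Either $\{v_{i',j'}\}$ is still available, in which case a tie-breaking rule applied identically in both runs makes it, not the overlapping $Q_h$, the considered clique, or $\{v_{i',j'}\}$ was already discarded for exceeding the budget, in which case $c_h\ge\frac{B}{r(s,d_{i'})}$ forces $Q_h$ over the budget as well, so it too is discarded rather than accepted. In either situation no cost-changing substitution is actually performed, the consumed budgets stay equal, and the budget tests of the two runs agree.

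Finally, the restriction $1\le k\le\min\{l'_{|\mathcal{Q}'|},\,l^*_{|\mathcal{Q}^*|}\}$ is exactly what keeps the comparison meaningful: beyond this window one of the runs has already made its last selection and entered a discard-only phase, so there is no reason for its $k$-th considered clique to line up with that of the still-active run. Confining the claim to the common active window, the invariant is maintained at every step and yields $Q_{i'_k}=Q_{i^*_k}$ throughout, completing the induction.
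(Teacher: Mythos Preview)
Your core inductive strategy matches the paper's: induct on $k$, maintain that the covered sets agree, and observe that any clique present in $U'$ but deleted from $U^*$ (a substitute $Q_{h'}$ previously chosen by Algorithm~\ref{greedy2}) has all its vertices already covered, hence zero marginal weight, hence is never the maximizer. That is exactly the engine of the paper's proof.

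Where you diverge is invariant~(iii). The paper does \emph{not} carry a budget-equality invariant in its proof of this lemma. Its argument for Lemma~\ref{lemma1} is purely about which clique maximizes $\frac{w_{k-1}(\cdot)}{c}$ over the respective candidate pools; budget never enters, because the ``considered'' clique is determined before any budget test is applied. The paper then defers the accept/reject synchronization to the separate Lemma~\ref{lemma2} and the covered-set equality to Lemma~\ref{lemma2.a}, and explicitly allows the budgets to differ (Lemma~\ref{lemma3} in fact uses $C(\mathcal{Q}^*_k)\le C(\mathcal{Q}'_k)$, not equality). So the ``delicate point'' you work hard to close is simply not part of the paper's proof of this statement.

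Your attempted resolution of~(iii) also leans on a tie-breaking rule favouring the singleton $\{v_{i',j'}\}$ over an overlapping pair with the same ratio. Neither algorithm in the paper specifies such a rule, so this step imports an assumption not present in the setup. For the lemma as stated you can drop invariant~(iii) entirely: carry only (i) and (ii), note $U^*\subseteq U'$ with the difference consisting of zero-marginal cliques, and conclude the maximizers coincide. That is the paper's proof, and it is shorter than yours. Your worry about budget divergence is a legitimate concern for the \emph{chain} of lemmas leading to Theorem~\ref{theorem.ratio}, but it belongs with Lemma~\ref{lemma2}, not here.
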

\begin{proof}
See Appendix~\ref{appendix_lemma1}.
\end{proof}

We can also obtain that
\begin{lemma}\label{lemma2}
At each $k$-th iteration, if the algorithm in \cite{Khuller1999} selects (does not select) a clique, the Algorithm~\ref{greedy2} must also select (not select) a clique, where $1\leq k\leq\min\{l'_{q'}, l^*_{q^*}\}$.
\end{lemma}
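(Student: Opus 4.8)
The plan is to prove both halves of the claim at once by induction on the iteration index $k$, strengthening the statement with two invariants that I carry from one iteration to the next: first, that after $k$ iterations the two runs have covered exactly the same set of elements, $S'_k = S^*_k$; and second, that they have spent exactly the same budget, $C(\mathcal{Q}'_k) = C(\mathcal{Q}^*_k)$. Together with Lemma~\ref{lemma1} (which applies throughout the range of this lemma, since $q'=|\mathcal{Q}'|$ and $q^*=|\mathcal{Q}^*|$), these two invariants force the accept/reject tests of the two algorithms to return identical verdicts at iteration $k$, which is precisely the assertion to be proved. The base case $k=1$ is immediate: both collections and budgets are empty, so by Lemma~\ref{lemma1} the same clique is examined and both runs decide identically.

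For the inductive step, I would assume the invariants after $k-1$ iterations with $1\le k\le \min\{l'_{q'}, l^*_{q^*}\}$. By Lemma~\ref{lemma1} both runs examine the same clique $Q_{i_k} := Q_{i'_k} = Q_{i^*_k}$. Since $S'_{k-1} = S^*_{k-1}$, the residual weights agree, $w'_{k-1}(Q_{i_k}) = w^*_{k-1}(Q_{i_k})$, and I would first record that this common value is strictly positive in the stated range: were it zero, the argmax rule would make every remaining clique have residual weight zero, so neither algorithm could make a further profitable selection, contradicting $k\le\min\{l'_{q'}, l^*_{q^*}\}$; in particular the degenerate ``intersection with empty new part'' branch of Algorithm~\ref{greedy2} never fires here. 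Because $C(\mathcal{Q}'_{k-1}) = C(\mathcal{Q}^*_{k-1})$, the budget test $C(\cdot)+c_{i_k}\le T$ returns the same answer for both runs: if it fails, both discard $Q_{i_k}$ (a non-selection on both sides, with covered sets and budgets unchanged); if it holds, the algorithm in \cite{Khuller1999} selects $Q_{i_k}$, and Algorithm~\ref{greedy2} likewise performs a selection, either adding $Q_{i_k}$ directly or adding its reduced clique $Q_{h'}$. Hence the select/non-select decisions match.

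The main obstacle, and where I expect to spend most of the effort, is showing that the substitution branch preserves both invariants. When $Q_{i_k}$ meets $S^*_{k-1}$, Algorithm~\ref{greedy2} replaces the examined $2$-clique by the singleton $Q_{h'}$ on its uncovered vertex, whereas \cite{Khuller1999} inserts the full clique $Q_{i_k}$. For the covered sets this is harmless, since the already-covered vertex contributes nothing new and $S^*_k = S^*_{k-1}\cup Q_{i_k} = S'_k$. For the budgets, however, I must rule out the possibility that the retained (uncovered) endpoint is the cheaper, non-bottleneck one, which would give $c_{h'} < c_{i_k}$ and destroy cost equality. Here I would argue by the greedy ratio rule: if the uncovered endpoint were the cheaper one, its own singleton would carry the same residual weight as $Q_{i_k}$ but a strictly smaller cost, hence a strictly larger ratio, so it would have been examined before $Q_{i_k}$; tracing every way it could have left $U^*$ earlier (a prior selection, a budget discard, or a substitution) leads, using the positivity of the residual weight together with the monotonicity of the accumulated cost, to a contradiction with $Q_{i_k}$ being the current argmax that carries positive residual weight and fits the budget. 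Therefore the retained vertex is always the bottleneck endpoint, so $c_{h'} = c_{i_k}$ and the substitution charges exactly what \cite{Khuller1999} charges, preserving $C(\mathcal{Q}'_k) = C(\mathcal{Q}^*_k)$. With both invariants re-established, the induction closes and the lemma follows.
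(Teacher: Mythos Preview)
Your proof is correct and in fact more careful than the paper's. The paper argues by a bare two-case split: if Khuller's algorithm selects, then Algorithm~\ref{greedy2} also does (because the positive residual weight guarantees a suitable substitute singleton exists); and if Khuller discards on the budget test, then Algorithm~\ref{greedy2} ``also needs to satisfy'' the same constraint and hence discards as well. That second implication is asserted without justifying why the two accumulated costs compare the right way---indeed the paper's own Lemma~\ref{lemma3} only records the inequality $C(\mathcal{Q}^*_k)\le C(\mathcal{Q}'_k)$, which points in the wrong direction for this step. You close this gap by carrying the stronger invariant $C(\mathcal{Q}'_k)=C(\mathcal{Q}^*_k)$ (together with $S'_k=S^*_k$, which the paper proves separately as Lemma~\ref{lemma2.a}) through a joint induction. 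The only nontrivial point is the substitution branch, and your observation there---that the surviving endpoint must be the bottleneck vertex, since otherwise its singleton would carry the same residual weight at strictly smaller cost and hence strictly larger ratio, so it would already have been examined and left $U^*$, and each possible exit (selection, substitution, or budget discard) contradicts either its being uncovered or the current clique passing the budget---is sound and is absent from the paper. The payoff is not only a rigorous proof of the present lemma: your cost equality would also collapse the paper's Lemma~\ref{lemma3} to a triviality.
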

\begin{proof}
See Appendix~\ref{appendix_lemma2}.
\end{proof}

According to Lemma~\ref{lemma1} and Lemma~\ref{lemma2}, we can get the following Lemma.
\begin{lemma}\label{lemma2.a}
After the $k$-th iteration, the cliques in the collections $\mathcal{Q}'_k$ and $\mathcal{Q}^*_k$ cover the same set of the elements, i.e., $S'_k=S^*_k$, where $1\leq k\leq\min\{l'_{q'}, l^*_{q^*}\}$.
\end{lemma}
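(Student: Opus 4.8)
The plan is to prove Lemma~\ref{lemma2.a} by induction on $k$, leveraging the two previous lemmas which establish a tight coupling between the two algorithms' behavior. The key observation is that $S'_k$ and $S^*_k$ are determined entirely by \emph{which} cliques are selected and the \emph{new} elements each contributes, so I first need to track not just whether a selection happens (Lemma~\ref{lemma2}) but that the same set of new elements gets covered at each step. I would set up the induction with the trivial base case $k=0$, where $S'_0=S^*_0=\emptyset$ by the initialization $\mathcal{Q}'_0=\mathcal{Q}^*_0=\emptyset$.

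For the inductive step, assume $S'_{k-1}=S^*_{k-1}$. By Lemma~\ref{lemma1}, the clique originally considered in the $k$-th iteration is identical in both algorithms, say $Q_{i'_{k}}=Q_{i^*_{k}}=:Q_h$. By Lemma~\ref{lemma2}, both algorithms either both select a clique or both skip. If both skip (because $C(\mathcal{Q}'_{k-1})+c_h>T$ or the residual weight is empty), then $S'_k=S'_{k-1}=S^*_{k-1}=S^*_k$ by the induction hypothesis, and we are done for this step. The substantive case is when both algorithms select. Here I would argue that the set of \emph{newly covered} elements is the same in both: Khuller's algorithm adds $Q_h$ and covers the elements $Q_h\setminus S'_{k-1}$, while our Algorithm~\ref{greedy2} adds either $Q_h$ itself (when $Q_h$ is disjoint from the cliques in $\mathcal{Q}^*_{k-1}$) or the residual clique $Q_{h'}=Q_h\setminus (Q_h\cap\{v_{i,j}\mid v_{i,j}\in Q_{h''}, Q_{h''}\in\mathcal{Q}^*_{k-1}\})$. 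In either subcase the newly covered elements are exactly $Q_h\setminus S^*_{k-1}$, which by the induction hypothesis $S^*_{k-1}=S'_{k-1}$ equals $Q_h\setminus S'_{k-1}$. Hence the newly covered set matches Khuller's, and $S'_k=S'_{k-1}\cup(Q_h\setminus S'_{k-1})=S^*_{k-1}\cup(Q_h\setminus S^*_{k-1})=S^*_k$, closing the induction.

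The main obstacle I anticipate is the case where our algorithm substitutes the reduced clique $Q_{h'}$ for $Q_h$: I must verify that $Q_{h'}$ covers precisely the portion of $Q_h$ that is new relative to $\mathcal{Q}^*_{k-1}$, i.e.\ that the set $\{v_{i,j}\mid v_{i,j}\in Q_{h''}, Q_{h''}\in\mathcal{Q}^*_{k-1}\}$ intersected with $Q_h$ equals $Q_h\cap S^*_{k-1}$. This requires noting that $S^*_{k-1}$ is by definition the union of all elements covered by cliques in $\mathcal{Q}^*_{k-1}$, so these two descriptions of the ``already-covered'' elements of $Q_h$ coincide, making the new contribution of $Q_{h'}$ exactly $Q_h\setminus S^*_{k-1}$. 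A secondary subtlety is confirming the existence and correctness of the substitute clique $Q_{h'}$ in $\mathcal{Q}$, but since $Q_h$ has size at most two under pairwise coding, $Q_{h'}$ is simply a singleton contained in $\mathcal{Q}$ by construction, so this is immediate. With the new-element sets matching at every iteration and the union accumulating identically, the equality $S'_k=S^*_k$ follows for all $k$ in the stated range.
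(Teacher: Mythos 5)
Your proof is correct and follows essentially the same route as the paper's: induction on $k$, using Lemma~\ref{lemma1} to identify the clique considered at each iteration and Lemma~\ref{lemma2} to synchronize the select/skip decisions, with the key set identity $S^*_{k-1}\cup\bigl(Q_h\setminus S^*_{k-1}\bigr)=S^*_{k-1}\cup Q_h$ handling the residual-clique case. Your two added remarks (starting the induction at $k=0$ rather than $k=1$, and noting that the substitute clique is a singleton guaranteed to exist in $\mathcal{Q}$ under pairwise coding) are minor refinements of the same argument.
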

\begin{proof}
See Appendix~\ref{appendix_lemma2.a}.
\end{proof}

Based on the above lemmas, we can further obtain the following lemma.
\begin{lemma}\label{lemma3}
The number of cliques added in $\mathcal{Q}'$ must be no more than the number of cliques added in $\mathcal{Q}^*$, i.e., $q'\leq q^*$, and $l_{q'}\leq l_{q^*}$.
\end{lemma}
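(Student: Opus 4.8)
The plan is to leverage the three preceding lemmas, which already pin down the behaviour of the two algorithms over the common range of iterations $1\le k\le m$, where $m=\min\{l'_{q'}, l^*_{q^*}\}$. By Lemma~\ref{lemma1} the two algorithms consider exactly the same clique in each such iteration; by Lemma~\ref{lemma2} they select or reject in perfect lockstep; and by Lemma~\ref{lemma2.a} they cover the same element set after each such iteration, i.e. $S'_k=S^*_k$. Consequently, through iteration $m$ the two collections contain the same number of selected cliques. The whole statement therefore reduces to deciding which algorithm reaches its last selection first: if I can show $l'_{q'}\le l^*_{q^*}$, then $m=l'_{q'}$, so all $q'$ selections of the algorithm in \cite{Khuller1999} occur by iteration $m$; by the lockstep property our Algorithm has also made $q'$ selections by iteration $m$ and possibly more afterwards, yielding at once $q'\le q^*$ and $l'_{q'}\le l^*_{q^*}$.

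The ingredient that breaks the symmetry in our favour is a cost-domination estimate. At every selecting iteration $k\le m$ the algorithm in \cite{Khuller1999} inserts the full clique $Q_{i'_k}$ at cost $c_{i'_k}$, whereas our Algorithm inserts either the identical clique or, when the disjointness branch is triggered, a sub-clique $Q_{h'}\subseteq Q_{i^*_k}=Q_{i'_k}$. Since the cost of a clique is the maximum transmission delay taken over its vertices, $Q_{h'}\subseteq Q_{i'_k}$ forces $c_{h'}\le c_{i'_k}$. Summing over the matched selecting iterations gives $C(\mathcal{Q}^*_k)\le C(\mathcal{Q}'_k)$ for all $k\le m$: our Algorithm covers the same elements while never consuming more of the budget $T$.

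With this estimate in hand I would rule out the alternative $l^*_{q^*}<l'_{q'}$ by contradiction. Assume it holds, so $m=l^*_{q^*}$ and our Algorithm makes no further selection after iteration $m$, yet the algorithm in \cite{Khuller1999} still selects its $(q^*+1)$-th clique $Q_{j'_{q^*+1}}$ at some later iteration, which therefore fits the budget and contributes a new element outside $S'_m=S^*_m$. Because the auxiliary sub-cliques that our Algorithm has removed from $U^*$ only cover elements already in $S^*_m$, this profitable clique (or the singleton carrying its new element) is still available to our Algorithm; and by the cost-domination bound our remaining budget is at least that of the algorithm in \cite{Khuller1999}, so our Algorithm would also select a (sub-)clique at this step, contradicting that it had stopped. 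Hence $l'_{q'}\le l^*_{q^*}$, and the conclusion $q'\le q^*$ follows as above.

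The delicate part, and the step I expect to absorb most of the work, is precisely this last contradiction. The difficulty is bookkeeping the divergence between $U^*$ and $U'$ after the disjointness branch fires: our Algorithm deletes an extra auxiliary clique $Q_{h'}$ at each such step, so I must argue carefully that no clique which the algorithm in \cite{Khuller1999} later selects with positive new contribution has been prematurely discarded from $U^*$, and that the appropriate uncovered singleton remains within budget. Everything outside this point is routine once the cost-domination estimate and Lemmas~\ref{lemma1}--\ref{lemma2.a} are available.
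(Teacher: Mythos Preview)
Your approach is essentially the same as the paper's. Both proofs hinge on the cost-domination estimate: at every selecting iteration $k\le m$, the clique that Algorithm~\ref{greedy2} inserts is either $Q_{i'_k}$ itself or a sub-clique $Q_{h'}\subseteq Q_{i'_k}$, and since $c_h=\max_{v_{i,j}\in Q_h}\{B/r(s,d_i)\}$ is monotone in the vertex set, this yields $C(\mathcal{Q}^*_k)\le C(\mathcal{Q}'_k)$. From this the paper simply asserts that Algorithm~\ref{greedy2} ``can add in more cliques before exceeding the constraint $T$'' and concludes $q'\le q^*$ and $l'_{q'}\le l^*_{q^*}$ directly, whereas you wrap the same conclusion in a contradiction argument on $l^*_{q^*}<l'_{q'}$.

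Your write-up is in fact more careful than the paper's: the bookkeeping issue you flag---that the auxiliary sub-cliques removed from $U^*$ must not swallow a clique that the Khuller algorithm later uses with positive new weight---is real, and the paper simply does not address it. Your proposed resolution (the removed auxiliaries cover only elements already in $S^*_m$, so the singleton on the new element remains available and within budget) is the right way to close that gap. So the route is the same, but you are doing the work the paper skips.
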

\begin{proof}
See Appendix~\ref{appendix_lemma3}.
\end{proof}

With the above lemmas, we can compare the weights of the elements covered by the two algorithms as follows.
\begin{lemma}\label{lemma4}
The total weight of the elements covered by $\mathcal{Q}'$ must be less than the total weight of the elements covered by $\mathcal{Q}^*$, i.e., $W(\mathcal{Q}')\leq W(\mathcal{Q}^*)$.
\end{lemma}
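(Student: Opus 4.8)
The plan is to leverage the three structural lemmas already established (Lemma~\ref{lemma2}, Lemma~\ref{lemma2.a}, and Lemma~\ref{lemma3}) so that the final weight comparison collapses into a short monotonicity argument, avoiding any direct estimation of individual clique weights. First I would use Lemma~\ref{lemma3}, which gives $l'_{q'}\le l^*_{q^*}$, to conclude that $\min\{l'_{q'},l^*_{q^*}\}=l'_{q'}$. This is the crucial bookkeeping step, because Lemma~\ref{lemma2.a} only guarantees that the two collections cover the same set of elements up to iteration $\min\{l'_{q'},l^*_{q^*}\}$; pinning this value to $l'_{q'}$ lets me invoke Lemma~\ref{lemma2.a} precisely at the iteration in which the algorithm in~\cite{Khuller1999} makes its last selection.

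Next I would observe that the algorithm in~\cite{Khuller1999} adds no further cliques after iteration $l'_{q'}$, since by definition $Q_{j'_{q'}}$ is its last selected clique. Hence its final collection satisfies $\mathcal{Q}'=\mathcal{Q}'_{l'_{q'}}$, so that $W(\mathcal{Q}')=W(S'_{l'_{q'}})$. Applying Lemma~\ref{lemma2.a} at $k=l'_{q'}$ gives $S'_{l'_{q'}}=S^*_{l'_{q'}}$, and therefore $W(S'_{l'_{q'}})=W(S^*_{l'_{q'}})=W(\mathcal{Q}^*_{l'_{q'}})$. Finally, since Algorithm~\ref{greedy2} only accumulates cliques, the relation $\mathcal{Q}^*_{l'_{q'}}\subseteq\mathcal{Q}^*$ holds and the covered element set never shrinks; because all weights $\alpha_{i,j}$ are nonnegative, $W(\mathcal{Q}^*_k)$ is nondecreasing in $k$, so $W(\mathcal{Q}^*_{l'_{q'}})\le W(\mathcal{Q}^*)$. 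Chaining these equalities with the final inequality yields $W(\mathcal{Q}')=W(\mathcal{Q}^*_{l'_{q'}})\le W(\mathcal{Q}^*)$, which is the claim.

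The main obstacle, and the point I would verify most carefully, is the indexing: ensuring that $l'_{q'}$ really is the correct cutoff at which both $\mathcal{Q}'$ has already frozen \emph{and} Lemma~\ref{lemma2.a} still applies. Because Algorithm~\ref{greedy2} can delete two cliques from $U^*$ in a single iteration (when it replaces $Q_h$ by the reduced clique $Q_{h'}$), the two algorithms may desynchronize after iteration $l'_{q'}$. I must therefore confine the element-equality argument to iterations $k\le l'_{q'}$ and, beyond that point, rely solely on the monotonicity of $W(\mathcal{Q}^*_k)$ rather than on any further coupling between the two runs. The nonnegativity of the benefits $\alpha_{i,j}$ is exactly what guarantees that the additional selections made by Algorithm~\ref{greedy2} after iteration $l'_{q'}$ can only increase the covered weight, which closes the argument.
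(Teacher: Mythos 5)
Your proof is correct and takes essentially the same route as the paper's: use Lemma~\ref{lemma3} to pin $\min\{l'_{q'},l^*_{q^*}\}=l'_{q'}$, apply Lemma~\ref{lemma2.a} at $k=l'_{q'}$ to conclude $W(\mathcal{Q}'_{l'_{q'}})=W(\mathcal{Q}^*_{l'_{q'}})$, and then observe that Algorithm~\ref{greedy2} may still select further cliques whose (nonnegative) weights can only increase $W(\mathcal{Q}^*)$. Your version is in fact slightly more explicit than the paper's about the bookkeeping, namely that $\mathcal{Q}'=\mathcal{Q}'_{l'_{q'}}$ is frozen after iteration $l'_{q'}$ and that $W(\mathcal{Q}^*_k)$ is nondecreasing in $k$.
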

\begin{proof}
See Appendix~\ref{appendix_lemma4}.
\end{proof}

Based on the above results, we then prove the approximation ratio as follows.
\begin{Theorem}\label{theorem.ratio}
The total benefit achieved with our Algorithm \ref{greedy2} is at least $1-\frac{1}{e}$ of the optimal solution with pairwise coding.
\end{Theorem}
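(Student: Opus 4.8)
The plan is to piggyback on the $(1-\frac{1}{e})$ guarantee that \cite{Khuller1999} establishes for the budgeted set coverage problem, and then to transfer that guarantee from the unconstrained greedy collection $\mathcal{Q}'$ to our disjoint-constrained collection $\mathcal{Q}^*$ using the lemmas already proved. Two facts drive the argument. First, the pairwise coding problem is a \emph{restriction} of budgeted set coverage: every feasible pairwise solution (a family of vertex-disjoint cliques of total cost at most $T$) is also feasible for budgeted set coverage, and on a disjoint family the pairwise objective $\sum_{Q_h}w_h$ coincides exactly with the covered-weight objective $W(\cdot)$, because $w_h=\sum_{v_{i,j}\in Q_h}\alpha_{i,j}$ and no element is double counted. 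Hence the optimal covered weight of budgeted set coverage, denoted $W^{BSC}_{opt}$, is an upper bound on the optimal pairwise value $W^{pair}_{opt}$. Second, Lemma~\ref{lemma4} already shows $W(\mathcal{Q}')\le W(\mathcal{Q}^*)$, i.e.\ our disjoint-constrained greedy never covers less weight than the unconstrained greedy of \cite{Khuller1999}.

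With these in hand I would assemble the inequality chain. Both the cited algorithm and Algorithm~\ref{greedy2} finish by comparing their greedy collection against the single heaviest clique $\max_{Q_h\in\mathcal{Q}}w_h$ and returning the better of the two; this augmentation is precisely what makes the $(1-\frac{1}{e})$ bound of \cite{Khuller1999} hold for the returned value, so I would apply the guarantee in the form $\max\{W(\mathcal{Q}'),\max_h w_h\}\ge (1-\frac{1}{e})\,W^{BSC}_{opt}$. Using Lemma~\ref{lemma4} to replace $W(\mathcal{Q}')$ by the larger $W(\mathcal{Q}^*)$, the value returned by Algorithm~\ref{greedy2}, namely $\max\{W(\mathcal{Q}^*),\max_h w_h\}$, satisfies $\max\{W(\mathcal{Q}^*),\max_h w_h\}\ge \max\{W(\mathcal{Q}'),\max_h w_h\}\ge (1-\frac{1}{e})\,W^{BSC}_{opt}\ge (1-\frac{1}{e})\,W^{pair}_{opt}$, which is exactly the claimed ratio.

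The main obstacle is not the final assembly but making the transfer rigorous, and this is where the chain of Lemmas~\ref{lemma1}--\ref{lemma4} is indispensable. The nontrivial point is that the disjointness constraint, which \cite{Khuller1999} does not handle, could a priori force our greedy to make worse choices than the unconstrained greedy; the lemmas rule this out by showing that the two runs consider the same clique at every iteration (Lemma~\ref{lemma1}), select in lock-step (Lemma~\ref{lemma2}), maintain identical covered-element sets $S'_k=S^*_k$ (Lemma~\ref{lemma2.a}), and therefore that $\mathcal{Q}^*$ covers at least as much weight as $\mathcal{Q}'$ (Lemma~\ref{lemma4}). I would accordingly spend the proof mostly verifying the upper bound $W^{BSC}_{opt}\ge W^{pair}_{opt}$ (the relaxation step together with the coincidence of the two objectives on disjoint families) and confirming that the best-single-clique augmentation is inherited correctly, after which the theorem follows directly from Lemma~\ref{lemma4} and the cited $(1-\frac{1}{e})$ result.
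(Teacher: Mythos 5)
Your proposal follows essentially the same route as the paper's own proof: bound the optimal pairwise-coding value by the optimal budgeted set coverage value via the relaxation argument ($OPT'\geq OPT^*$ since any disjoint family is feasible for budgeted coverage and the objectives coincide there), pass from $W(\mathcal{Q}')$ to $W(\mathcal{Q}^*)$ using Lemma~\ref{lemma4}, and invoke the $(1-\frac{1}{e})$ guarantee of \cite{Khuller1999}. Your version is in fact slightly more careful than the paper's, since you attach the cited bound to the returned value $\max\{W(\mathcal{Q}^*),\max_h w_h\}$ (i.e., including the best-single-clique augmentation in the last step of Algorithm~\ref{greedy2}) rather than to the raw greedy collection alone, but the structure and key ingredients of the argument are identical.
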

\begin{proof}
See Appendix~\ref{appendix_theorem.ratio}.
\end{proof}

\section{Simulation Results}\label{Sec.simulation}
In this section, we demonstrate the effectiveness of our RSNC scheme through simulations.
We randomly generate a set of available packets in $H(d_i)$ and the ``wanted" packets in $R(d_i)$ at destination $d_i\in D$, where $H(d_i)\bigcap R(d_i)=\emptyset$. The maximum transmission rate from $s$ to $d_i$ is randomly selected in $[rmin,rmax]$, and the packet reception deadline is randomly generated in $[Tmin,Tmax]$.

For comparison purpose, we mainly include two baseline algorithms,
namely, {\em DSF (deadline smallest first) coding} algorithm \cite{ZX2010Broadcast6} and {\em SIN-1} algorithm \cite{XTL2006Time-critical14}. DSF coding algorithm does not consider the heterogenous transmission rates on the links, and in each time slot, it always finds the maximum weight clique in the defined graph. SIN-1 algorithm always sends the packet with the minimum ``SIN-1" in each transmission, where ``SIN-1" of packet $p_j$ is defined as the ratio of the most urgent deadline of the requests for $p_j$ to the total number of requests for $p_j$. We also include Random Linear Network Coding and Index Coding as baseline algorithms in Section \ref{RLNC}.

In the simulation, we compare the total benefit obtained by timely receiving the required packets at their destinations, under different transmission schemes. For each setting, we present the average result of 200 samples.

\subsection{The Impact of the Transmission Rate}
We first investigate the impact of the transmission rate on the performance of the designed Algorithm~\ref{Alg.whole} for the whole transmission process. We randomly generate the benefit of each packet (i.e., $\alpha_{i,j}$) from $0.5$ to $2$.
We also set $n=m=10,Tmin=10,Tmax=50$ and vary the scale of the transmission rates, i.e., $[rmin, rmax]$.

As shown in Fig.~\ref{sim.rate}, with our RSNC scheme, the total benefit obtained by timely receiving the required packets at the destinations is much higher than the other two schemes. This is because RSNC not only considers the packet reception deadline for each packet, but also utilizes the heterogenous transmission rates from $s$ to the receiver nodes. We also can see that with the increase of the transmission rates, the total benefit increases. The reason is that higher transmission rates incur less transmission delay, which can satisfy more timely receptions in the following transmissions.

\begin{figure}[t]
\begin{center}
\includegraphics[height=46mm,width=65mm]{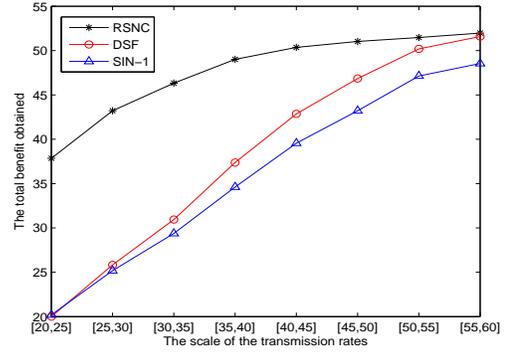}
\caption{The impact of the transmission rate on the performance of the whole transmission process.}\label{sim.rate}
\end{center}
\end{figure}

\subsection{The Impact of the Number of Destinations $m$}
\begin{figure}[t]
\begin{center}
\includegraphics[height=45mm,width=80mm]{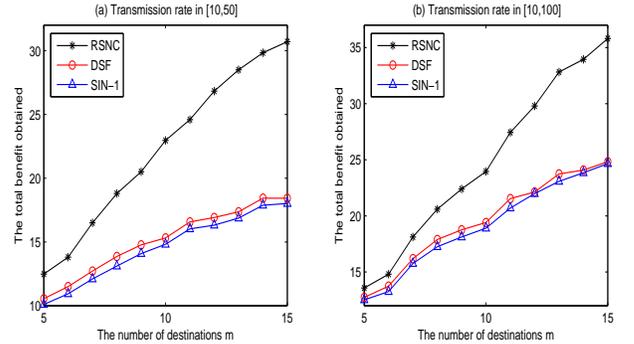}
\caption{The total benefit obtained by timely receiving the required packets vs. the number of destinations $m$.} \label{sim.m}
\end{center}
\end{figure}

We then investigate the impact of the number of destinations $m$ and the transmission rates on the deadline miss ratio. The benefit of each packet (i.e., $\alpha_{i,j}$) is randomly generated from $0.5$ to $2$. We set $n=10,Tmin=10,Tmax=50$ by varying $m$ in $[5,15]$ for $rmin=10,rmax=50$ and $rmin=50,rmax=100$.

As shown in Fig.~\ref{sim.m}, with our RSNC scheme, the total weight obtained by timely receiving the required packets at the destinations, is much more than with other schemes. We can also see that the DSF algorithm does not show significant gain over SIN-1 algorithm. This is because, although with network coding in DSF, more packets can be combined together, the encoded packet may still miss its deadline at some destinations, due to inappropriate transmission rate used.
In addition, with the increase of $m$, the total obtained benefit increases, as more requests need to be satisfied, which gives higher chance to obtain more benefit.
By comparing Fig.~\ref{sim.m}(a) and (b), we observe that, with the increase of the transmission rates, the total benefit increases, similar to Fig.~\ref{sim.rate}.

\subsection{The Impact of the Number of Packets $n$}
\begin{figure}[t]
\begin{center}
\includegraphics[height=45mm,width=90mm]{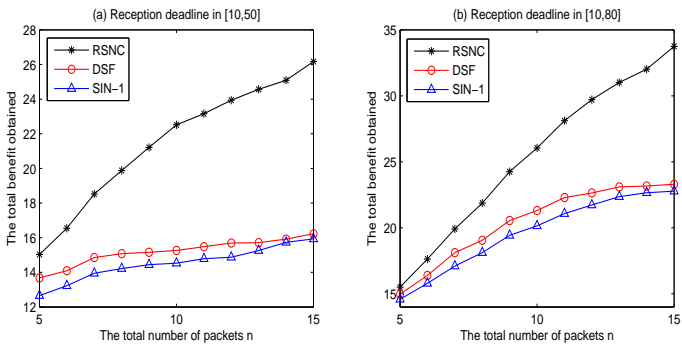}
\caption{The total benefit obtained by timely receiving the required packets vs. the total number of packets $n$.} \label{sim.n}
\end{center}
\end{figure}

We now investigate the impact of the total number of packets $n$ and the reception deadlines on the total benefit obtained by timely receiving the required packets at the receiver nodes. The benefit of each packet (i.e., $\alpha_{i,j}$) is randomly generated from $0.5$ to $2$. We set $m=10, rmin=10,rmax=50$ by varying $n$ in $[10,40]$ for the cases of $Tmin=10, Tmax=50$ and $Tmin=10,Tmax=80$.

From Fig.~\ref{sim.n}, we can see that our proposed RSNC scheme achieves the largest benefit while at the same time ensure timely receiving the required packets.
In addition, with the increase of $n$, the total benefit increases. This is because more requests need to be satisfied
at node $s$, which thus has higher chance to get more benefit. From Fig.~\ref{sim.n}, it is easy to see that the total benefit with $Tmax=80$ is smaller than with $Tmax=50$. It is reasonable because with the increase of the deadlines, less packet will lose its deadline at its destination node.

\subsection{The Deadline Miss Ratio}
\begin{figure}[t]
\begin{center}
\includegraphics[height=45mm,width=90mm]{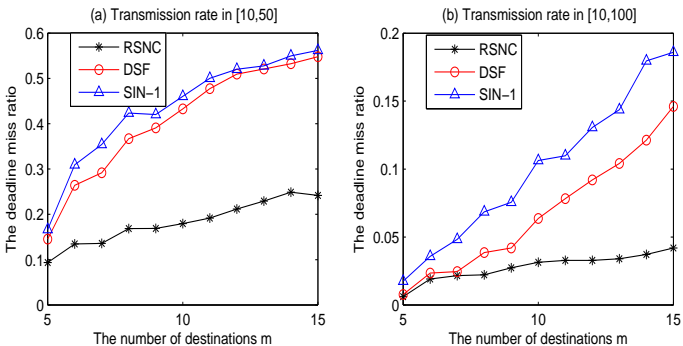}
\caption{The deadline miss ratio vs. the number of destinations.} \label{sim.deadline}
\end{center}
\end{figure}

We now investigate the performance of the deadline miss ratio, which is defined as the ratio of the number of the requests that miss their deadlines at the destinations, to the total number of the requests. In this case, we set every packet with the same benefit $\alpha_{i,j}=1$, so the total benefit directly relates to the total number of packets that are successfully delivered within the deadline. We also set $n=10,Tmin=10,Tmax=50$ by varying $m$ in $[5,15]$ for $rmin=10,rmax=50$ and $rmin=50,rmax=100$.

As shown in Fig.~\ref{sim.deadline}, the deadline miss ratio with our RSNC scheme is much lower than with other schemes. With the increase of $m$, the gain of our RSNC scheme increases. We can also see that the DSF algorithm does not show significant gain over SIN-1 algorithm. This is because, although with network coding in DSF, more packets can be combined together, the encoded packet may still miss its deadline at some destinations, due to inappropriate transmission rate used. From Fig.~\ref{sim.deadline}, we see that, with the increase of $m$, the deadline miss ratio increases. The reason is that there are more packets to be sent at $s$ within the same deadline scale.

\subsection{The Impact of $\alpha_{i,j}$}

We now study the impact of the benefit $\alpha_{i,j}$ on the
performance of the proposed algorithm. To fairly comparison, the
total number of requests are set to be $40$, and each request has
the same delay constraint. In addition, we set two kinds of
benefits: $\alpha_A$ and $\alpha_B$, where the first $20$ requests
are with benefit $\alpha_A$, while the other $20$ requests are with
benefit $\alpha_B$. We also set $\alpha_A=1$ and vary $\alpha_B$
between $[1,5]$.

To study the impact of the benefits on the performance of the
proposed algorithm, we define {\em the successful ratio for requests
with higher benefit} as the ratio of the number of requests that are
with benefit $\alpha_B$ and timely received at the destinations, to
the total number of requests that are timely received at the
destinations. As shown in Fig.~\ref{sim.alpha}, when
$\alpha_A=\alpha_B=1$, the successful ratio for requests with higher
benefit is almost $0.5$. This is because, with the same
benefit/priority, the number of successful requests with $\alpha_A$
is almost the same as the number of successful requests with
$\alpha_B$. We can also see that, with the increase of $\alpha_B$,
the successful ratio for requests with higher benefit increases
significantly, as the requests with higher benefit have higher
priority to be scheduled.

\begin{figure}[t]
\begin{center}
\includegraphics[height=48mm,width=65mm]{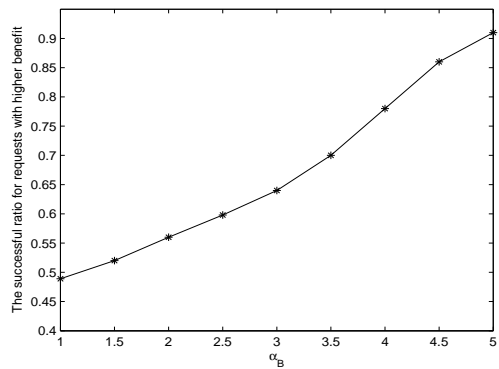}
\caption{The impact of the benefits of the packets.}\label{sim.alpha}
\end{center}
\end{figure}

\subsection{Comparison with Random Linear Network Coding and Index Coding}\label{RLNC}
To compare the performance of the proposed RSNC algorithm with random linear network coding and index coding, we investigate the performance of the deadline miss ratio. We set $n=10,Tmin=10,Tmax=50$ by varying $m$ in $[5,15]$ for $rmin=10,rmax=100$.

\begin{figure}[t]
\begin{center}
\includegraphics[height=48mm,width=65mm]{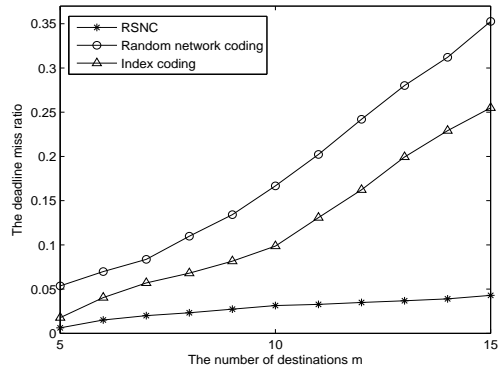}
\caption{Comparison with random linear network coding and index coding.}\label{sim.rlnc}
\end{center}
\end{figure}

As shown in Fig.~\ref{sim.rlnc}, the deadline miss ratio with the proposed RSNC scheme is much lower than that with random linear network coding and index coding. This is because either random linear network coding or index coding considers the deadline of packet reception. Particularly, with random linear network coding, the destination node cannot decode the native packet until receiving the full-rank encoded packets, which further delay the reception/decoding of the native packets. We can also see that with the increase of $m$, the gain of our RSNC scheme increases.

\subsection{Pairwise Coding}
In this subsection, we investigate the performance of pairwise coding for a special case when the packet reception deadline is the same for each packet. We compare the proposed Algorithm~\ref{greedy2} with the optimal pairwise coding obtained with ILP by Eq.~(\ref{ILP}) and Eq.~(\ref{ILP.constraint}).
In this simulation, we set $m=10,n=10,B=100k,rmin=10k/s,rmax=100k/s$. The benefit for timely receiving packet $p_j$ is randomly selected from $1$ to $10$. We conduct the simulation by varying the packet reception deadline $T$ in $[5, 35]$.

As shown in Fig.~\ref{sim.theoretic}, the total benefit obtained with our greedy Algorithm~\ref{greedy2} is more than $(1-\frac{1}{e})$ of optimal solution to pairwise coding, which verifies our analysis result. We also can find that the total benefit achieved with our algorithm increases with the increase of the packet reception deadline. This is reasonable since more packets can be timely scheduled before missing the deadline, when the packet reception deadline increases.

\begin{figure}[t]
\begin{center}
\includegraphics[height=48mm,width=65mm]{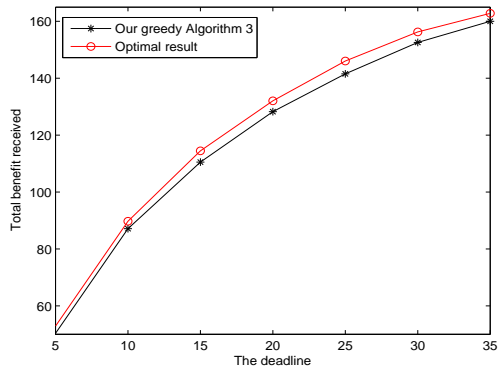}
\caption{The impact of packet reception deadline on the benefits that can be achieved with pairwise coding.}\label{sim.theoretic}
\end{center}
\end{figure}

\section{Conclusion}\label{Sec.conclusion}
In this paper, we propose a novel joint rate selection and network coding (RSNC) scheme for time critical applications. We first prove that the proposed problem is NP-hard, and design a novel graph model to model the problem. Using the graph model, we mathematically formulate the problem. We also propose a metric, based on which we design a heuristic algorithm to determine transmission rate and coding strategy for each transmission so as to maximize the total benefit by timely receiving the packets. The benefit can be defined based on the priority or importance of a packet. We then study the pairwise coding for a special case when all the deadlines for all the packets are the same, and design an efficient algorithm which can achieve at least $1-\frac{1}{e}$ of the optimal solution for pairwise coding. Finally, simulation results demonstrate the proposed RSNC algorithm effectively increases the benefit and ensures timely packet receptions at the destination nodes.


\appendices
\section{Proof of Lemma~\ref{lemma1}}\label{appendix_lemma1}

We prove it by induction.

When $k=1$, obviously the above lemma is true. That is, $Q_{i'_1}=Q_{i^*_1}$, and thus $\mathcal{Q}'_1=\mathcal{Q}^*_1=\{Q_{j'_{1}}\}=\{Q_{j^*_{1}}\}=\{Q_{i^*_1}\}$.
For the case when $k=2$, from the algorithms, we know that $Q_{i'_2}$ and $Q_{i^*_2}$ maximize the ratio $\frac{w'_1(Q_{i'_2})}{c_{i'_2}}$ and $\frac{w^*_1(Q_{i^*_2})}{c_{i^*_2}}$, among the rest cliques in $\mathcal{Q}-\mathcal{Q}'_1$ and $\mathcal{Q}-\mathcal{Q}^*_1$ respectively. Since $\mathcal{Q}'_1=\mathcal{Q}^*_1$, $Q_{i'_2}$ and $Q_{i^*_2}$ are the same.


Without loss of generality, we assume that the above lemma is true for each of the first $k$ iterations, where $1\leq k< \min\{l'_{q'}, l^*_{q^*}\}$. Let $U'$ and $U^*$ be the collections of unconsidered cliques left after the $k$-th iteration with the algorithm in \cite{Khuller1999} and Algorithm~\ref{greedy2} respectively.
We then consider the $(k+1)$-th iteration.

According to the definition, $Q_{i'_{k+1}}$ and $Q_{i^*_{k+1}}$ maximize the ratio $\frac{w'_{k}(Q_{i'_{k+1}})}{c_{i'_{k+1}}}$ and $\frac{w^*_{k}(Q_{i^*_{k+1}})}{c_{i^*_{k+1}}}$, among the left cliques in $U'$ and $U^*$ respectively. Firstly, we can obtain that $U^*\subseteq U'$. This is because, at each of the first $k$ iterations, our Algorithm~\ref{greedy2} may delete two cliques from $U^*$: one is the current considered clique, and the other is the selected clique in the current iteration, while the algorithm of \cite{Khuller1999} only deletes the current considered clique. According to the above assumption, the cliques considered in each of the first $k$ iterations with these two algorithms are the same.
In addition, for each clique $Q_h$ in $U'-U^*$, the weight of the elements covered by $Q_h$ but not covered by any clique in $\mathcal{Q}'_k$ must be 0, which thus does not need to be considered. In other words, the clique $Q_{i'_{k+1}}$ that maximizes the ratio of weight $w'_{k}(Q_{i'_{k+1}})$ to cost among cliques in $U'$ should be the same as the clique $Q_{i^*_{k+1}}$ that maximizes the ratio of weight $w^*_{k}(Q_{i^*_{k+1}})$ to cost among cliques in $U^*$. Thus, the above lemma holds for the $(k+1)$-th iteration.

To summarize, we can obtain that at each $k$-th iteration, where $1\leq k\leq\min\{l'_{q'}, l^*_{q^*}\}\}$, the cliques considered by the algorithm in \cite{Khuller1999} and Algorithm~\ref{greedy2} are the same, i.e., $Q_{i'_{k}}=Q_{i^*_{k}}$.

\section{Proof of Lemma~\ref{lemma2}}\label{appendix_lemma2}

Note that according to Lemma~\ref{lemma1}, at the $k$-th iteration, we have $Q_{i'_k}=Q_{i^*_k}$. We then prove the above lemma by considering the following two cases.

Case 1: If algorithm in \cite{Khuller1999} selects clique $Q_{i'_k}$, there are two cases for Algorithm~\ref{greedy2}. If there is no common element between clique $Q_{i^*_k}$ and the cliques in $\mathcal{Q}^*_{k-1}$, $Q_{i^*_k}$ must be also selected. Otherwise, there must exist another clique, which includes the elements covered by clique $Q_{i^*_k}$ but not covered by any clique in $\mathcal{Q}^*_{k-1}$. This is because if there does not exists such a clique, the weight $w^*_{k}(Q_{i^*_{k+1}})$ must be 0, which contradicts that the clique $Q_{i^*_k}$ maximizes the ratio of the weight $w^*_{k}(Q_{i^*_{k+1}})$ to cost. Thus, Algorithm~\ref{greedy2} must also select a clique.

Case 2: If Algorithm in~\cite{Khuller1999} cannot add in clique $Q_{i'_k}$, Algorithm~\ref{greedy2} will also discard the clique and select nothing. This is because,
the only reason that the Algorithm in ~\cite{Khuller1999} cannot select the current clique is that adding the current clique exceeds the delay constraint $T$, which also needs to be satisfied by our Algorithm~\ref{greedy2}. In this case, both algorithm cannot add in new cliques.

To sum up, we proved Lemma~\ref{lemma2}.
\section{Proof of Lemma~\ref{lemma2.a}}\label{appendix_lemma2.a}

We prove it by induction.

Firstly, when $k=1$, we can easily obtain that the cliques in $\mathcal{Q}'_1$ and $\mathcal{Q}^*_1$ cover the same set of elements, i.e., $S'_1=S^*_1$.
Without loss of generality, we assume that after the first $k-1$ iterations, the cliques in $\mathcal{Q}'_{k-1}$ and $\mathcal{Q}^*_{k-1}$ cover the same set of the elements. That is, $S'_{k-1}=S^*_{k-1}$.  We then consider the case of the $k$-th iteration.

As in Lemma~\ref{lemma2}, at the $k$-th iteration, if algorithm in \cite{Khuller1999} selects the clique $Q_{i'_k}$, there are two cases for Algorithm~\ref{greedy2}: 1) select the clique $Q_{i^*_k}$, or 2) add the clique $Q_{h}$ that includes the elements covered by $Q_{i^*_k}$ but not covered by any clique in $\mathcal{Q}^*_{k-1}$, i.e., $Q_{h}=Q_{i^*_k}-Q_{i^*_k}\bigcap S^*_{k-1}$. For the first case, according to Lemma~\ref{lemma1}, we have $Q_{i'_k}=Q_{i^*_k}$. As $S'_{k-1}=S^*_{k-1}$, we can obtain that
\begin{eqnarray}
S^*_k&=&S^*_{k-1}\bigcup Q_{i^*_k}\notag\\
&=&S'_{k-1}\bigcup Q_{i'_k}\notag\\
&=&S'_k
\end{eqnarray}
For the second case, we can obtain that
\begin{eqnarray}
S^*_k&=&S^*_{k-1}\bigcup Q_h\notag\\
&=&S^*_{k-1}\bigcup (Q_{i^*_k}-Q_{i^*_k}\bigcap S^*_{k-1})\notag\\
&=&S^*_{k-1}\bigcup Q_{i^*_k}\notag\\
&=& S'_{k-1}\bigcup Q_{i'_k}=S'_k
\end{eqnarray}
Thus, in either case, $\mathcal{Q}^*_{k-1}\bigcup \{Q_{i^*_k}\}$ must cover the same set of elements as $\mathcal{Q}^*_{k-1}\bigcup \{Q_{h'}\}$, i.e., $S'_k=S^*_k$.

If at the $k$-th iteration, algorithm in \cite{Khuller1999} does not add in a clique, according to Lemma~\ref{lemma2}, Algorithm~\ref{greedy2} must also not select any new clique. That is, $\mathcal{Q}'_k=\mathcal{Q}'_{k-1}$, $\mathcal{Q}^*_k=\mathcal{Q}^*_{k-1}$. According to the above assumption, we have $S'_k=S^*_k$. .

Thus, after the $k$-th iteration, the cliques in $\mathcal{Q}'_k$ and $\mathcal{Q}^*_k$ cover the same set of the elements, which thus proves Lemma~\ref{lemma2.a}.

\section{Proof of Lemma~\ref{lemma3}}\label{appendix_lemma3}

According to lemma~\ref{lemma2}, at the $k$-th iteration, where $1\leq k\leq\min\{l'_{q'}, l^*_{q^*}\}$, if algorithm in \cite{Khuller1999} adds in clique $Q_{i'_k}$, our Algorithm~\ref{greedy2} must also select clique $Q_{i^*_k}$ or clique $Q_h=Q_{i^*_k}-Q_{i^*_k}\bigcap S^*_{k-1}$, which is a subset of clique $Q_{i^*_k}$. According to Lemma~\ref{lemma1}, we have $Q_{i'_k}=Q_{i^*_k}$.

Firstly, if Algorithm \ref{greedy2} selects $Q_{i^*_k}$, the cost added by both algorithms must be the same. Secondly, if Algorithm~\ref{greedy2} selects another clique $Q_h$, which includes the subset of elements in $Q_{i^*_k}$, the cost added by Algorithm~\ref{greedy2} must be no more than that added by clique $Q_{i'_k}$. In other words, at the $k$-th iteration, the cost added by selecting a clique with algorithm in \cite{Khuller1999} is no less than that by our Algorithm~\ref{greedy2}.

Thus, Algorithm~\ref{greedy2} can add in more cliques before exceeding the constraint $T$, i.e., $q'\leq q^*$. In addition, according to Lemma \ref{lemma2}, we have $l_{q'}\leq l_{q^*}$.

\section{Proof of Lemma~\ref{lemma4}}\label{appendix_lemma4}

From Lemma~\ref{lemma2.a}, we can obtain that after the first $k=\min\{l'_{q'}, l^*_{q^*}\}=l'_{q'}$ iterations, the collections of the selected cliques with both algorithms, e.g., $\mathcal{Q}'_k$ and $\mathcal{Q}^*_k$, cover the same set of elements. In other words, $W(\mathcal{Q}'_k)=W(\mathcal{Q}^*_k)$.

In addition, because $l'_{q'}\leq l^*_{q^*}$, after $k=l'_{q'}$ iterations, the Algorithm~\ref{greedy2} still has a chance to select more $q^*-q'$ cliques. Hence, the total weight of the elements covered by $\mathcal{Q}'$ must be less than the total weight of the elements covered by $\mathcal{Q}^*$, i.e., $W(\mathcal{Q}')\leq W(\mathcal{Q}^*)$.

\section{Proof of Theorem~\ref{theorem.ratio}}\label{appendix_theorem.ratio}

Assume that the optimal solution of budgeted set coverage problem in \cite{Khuller1999} is $OPT'$, while the optimal solution of our pairwise coding is $OPT^*$.
The only difference between these two problems is that in our problem, the selected cliques should be disjoint with each other. Note that the optimal solution to our problem is only one of the feasible solutions to the problem in \cite{Khuller1999}. Thus, $OPT'\geq OPT^*$.

With Lemma~\ref{lemma4}, we have $W(\mathcal{Q}^*)\geq W(\mathcal{Q}')$. According to \cite{Khuller1999}, we can obtain
\begin{eqnarray}
W(\mathcal{Q}^*)&\geq& W(\mathcal{Q}')\notag\\
&\geq& (1-\frac{1}{e})OPT'\notag\\
&\geq& (1-\frac{1}{e})OPT^*
\end{eqnarray}

Thus, the total weight achieved with our algorithm is at least $1-\frac{1}{e}$ of the optimal solution for pairwise coding.

As maximizing the total benefit of the packets that are received without missing the deadlines is equivalent to finding a collection of the cliques with maximum weight before deadline $T$, the total benefit achieved with our algorithm is thus at least $1-\frac{1}{e}$ of the optimal solution for pairwise coding.

\begin{biography}[{\includegraphics[width=1\textwidth]{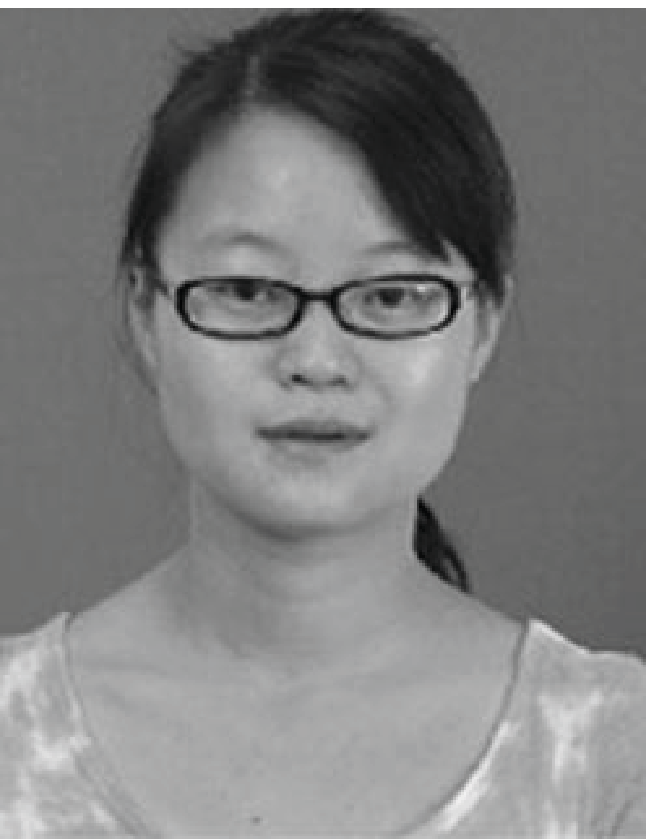}}]{Xiumin Wang}
received her B.S. from the Department of Computer Science, Anhui
Normal University, China, in 2006, and joint Ph.D. degree from
School of Computer Science and Technology of University of Science
and Technology of China and City University of Hong Kong. She did
her postdoc at Singapore University of Technology and Design from
2011 to 2012. Currently, she is with the School of Computer and
Information, Hefei University of Technology. Her research interests
include wireless networks, routing design, and network coding.
\end{biography}

\begin{biography}[{\includegraphics[width=1\textwidth]{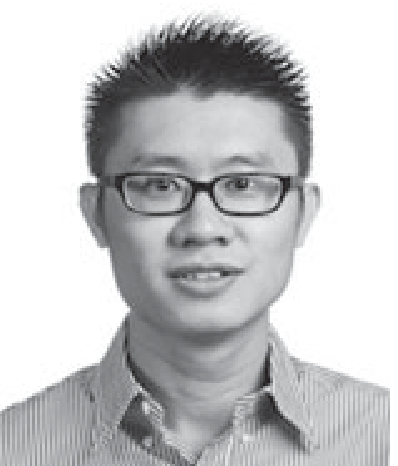}}]{Chau Yuen}
received the B.Eng. and Ph.D. degrees from Nanyang Technological
University, Singapore, in 2000 and 2004, respectively. In 2005, he
was a Postdoctoral Fellow with Lucent Technologies Bell Labs, Murray
Hill, NJ, USA. In 2008, he was a Visiting Assistant Professor with
Hong Kong Polytechnic University, Kowloon, Hong Kong. From 2006 to
2010, he was with the Institute for Infocomm Research, Singapore, as
a Senior Research Engineer. Since 2010, he has been an Assistant
Professor with the Singapore University of Technology and Design,
Singapore. Dr. Yuen serves as an Associate Editor for the IEEE
TRANSACTIONS ON VEHICULAR TECHNOLOGY. He received the IEEE Asia
Pacific Outstanding Young Researcher Award in 2012.
\end{biography}

\begin{biography}[{\includegraphics[width=1\textwidth]{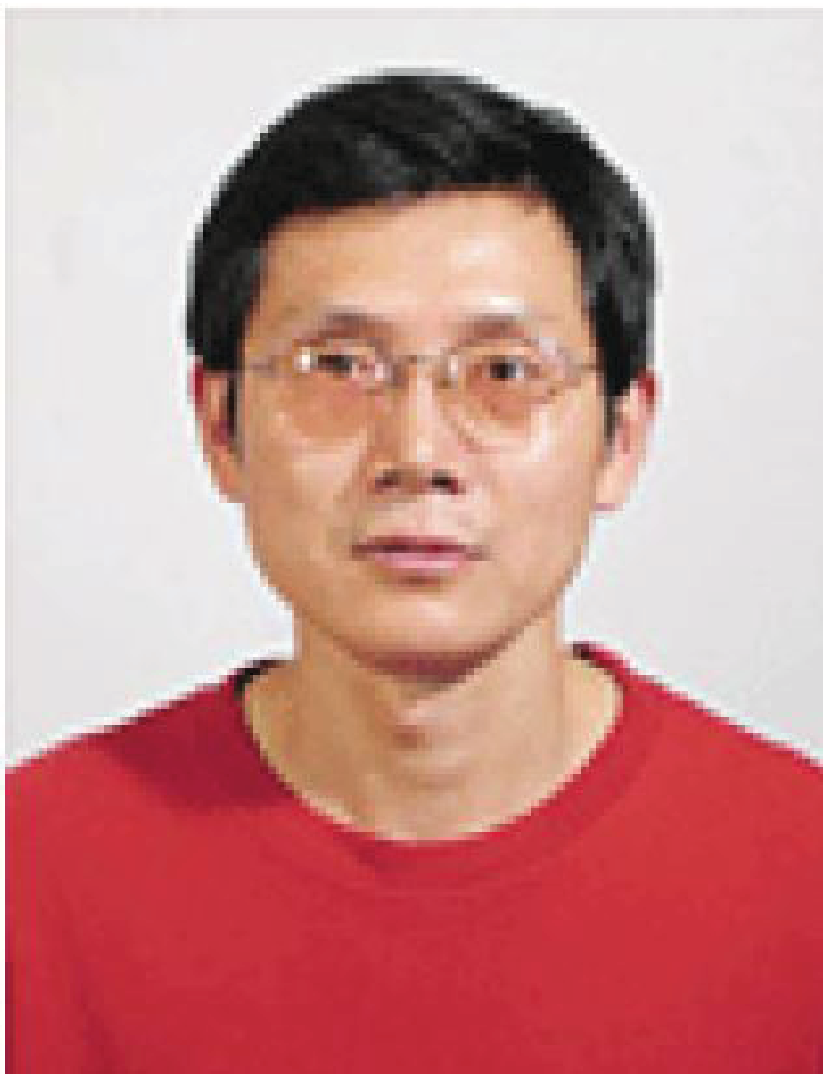}}]{Yinlong Xu}
received his B.S. in Mathematics from Peking University in 1983, and
MS and Ph.D in Computer Science from University of Science and
Technology of China(USTC) in 1989 and 2004 respectively. He is
currently a professor with the School of Computer Science and
Technology at USTC. Prior to that, he served the Department of
Computer Science and Technology at USTC as an assistant professor, a
lecturer, and an associate professor. Currently, he is leading a
group of research students in doing some networking and high
performance computing research. His research interests include
network coding, wireless network, combinatorial optimization, design
and analysis of parallel algorithm, parallel programming tools, etc.
He received the Excellent Ph.D Advisor Award of Chinese Academy of
Sciences in 2006.
\end{biography}

\end{document}